\tikzset{
  avertex/.style={circle,draw,inner sep=2pt,fill=acolor},
  bvertex/.style={circle,draw,inner sep=2pt,fill=bcolor},
  cvertex/.style={circle,draw,inner sep=2pt,fill=ccolor},
  avertex/.style={regular polygon,regular polygon sides=4,draw,inner sep=2pt,fill=white},
  bvertex/.style={regular polygon,regular polygon sides=4,draw,inner sep=2pt,fill=Gray},
  cvertex/.style={regular polygon,regular polygon sides=4,draw,inner sep=2pt,fill=Black},
  vertex/.style={circle,draw,inner sep=2pt,fill=white},
  vvertex/.style={circle,draw,inner sep=2pt,fill=Gray},
  vvvertex/.style={circle,draw,inner sep=2pt,fill=Black},
  bbvertex/.style={regular polygon,regular polygon sides=3,draw,inner sep=1.6pt,fill=Gray}
}
\def\afterthmseparator{.}
\renewcommand{\@begintheorem}[2]{\trivlist
      \item[\hskip \labelsep{\bf #1\ #2\unskip\afterthmseparator}]\em}
\renewcommand{\@opargbegintheorem}[3]{\trivlist
      \item[\hskip \labelsep{\bf #1\ #2\ (#3)\unskip\afterthmseparator}]\em}
\newtheorem{theorem}{Theorem}[section]
\newtheorem{lemma}[theorem]{Lemma}
\newtheorem{corollary}[theorem]{Corollary}
\newtheorem{remark}[theorem]{Remark}
\newcommand{\bull}{\mbox{$\;\;\;$\vrule height .9ex width .8ex depth -.1ex}}
\newenvironment{proof}{\par\smallbreak\noindent{\bf Proof.~}}
{\unskip\nobreak\hfill \bull \par\medbreak}
\newcommand{\noproof}{~\bull}
\newcounter{oq}
\newcommand{\que}{\refstepcounter{oq}\par{\bf \theoq.}~}
\newcommand{\hide}[1]{}
\newcommand{\refeq}[1]{(\ref{eq:#1})}
\newcommand{\setdef}[2]{\left\{ \hspace{0.5mm} #1 : \hspace{0.5mm} #2 \right\}}
\newcommand{\msetdef}[2]{\left\{\!\!\left\{ \hspace{0.5mm} #1 : \hspace{0.5mm} #2 \right\}\!\!\right\}}
\newcommand{\function}[2]{:#1 \rightarrow #2}
\newcommand{\cclass}[1]{\textsf{\upshape #1}}
\newcommand{\ac}[1]{\cclass{AC$^{\cclass{#1}}$}\xspace}
\newcommand{\p}{\cclass{P}\xspace}
\newcommand{\stabi}[1]{\mathit{Stab}(#1)}
\newcommand{\fo}[1]{\ensuremath{\mathrm{FO}^{#1}}\xspace}
\newcommand{\twoclogic}{\ensuremath{\mathrm{FO}^2_\#}\xspace}
\newcommand{\kclogic}{\ensuremath{\mathrm{FO}^k_\#}\xspace}
\newcommand{\game}{\mbox{\sc Game}}
\newcommand{\vtype}[1]{%
\raisebox{.2mm}{%
\begin{tikzpicture}
\node[#1] at (0,0) {};
\end{tikzpicture}\,}}
\newcommand{\barG}{\overline{G}}
\newcommand{\barH}{\overline{H}}
\begin{document}

\title{Universal covers, color refinement, and\\ two-variable counting logic:\\
Lower bounds for the depth}

\author{Andreas Krebs\thanks{Wilhelm-Schickard-Institut, Universit\"at T\"ubingen, Sand 13, 72076 T\"ubingen, Germany.}\ \
and Oleg Verbitsky\thanks{%
Humboldt-Universit\"at zu Berlin,
Institut f\"ur Informatik,
Unter den Linden 6,
D-10099 Berlin.
Supported by DFG grant VE 652/1--1.
On leave from the Institute for Applied Problems of Mechanics and Mathematics,
Lviv, Ukraine.}}

\date{}

\maketitle

\begin{abstract}
Given a connected graph $G$ and its vertex $x$, let $U_x(G)$ denote the universal cover of $G$
obtained by unfolding $G$ into a tree starting from $x$. 
Let $T=T(n)$ be the minimum number such that,
for graphs $G$ and $H$ with at most $n$ vertices each,
the isomorphism of $U_x(G)$ and $U_y(H)$ surely follows from the isomorphism
of these rooted trees truncated at depth $T$.
Motivated by applications in theory of distributed computing,
Norris [Discrete Appl.\ Math.\ 1995] asks if $T(n)\le n$.
We answer this question
in the negative by establishing that $T(n)=(2-o(1))n$.
Our solution uses basic tools of finite model theory
such as a bisimulation version of the Immerman-Lander 2-pebble counting game.

The graphs $G_n$ and $H_n$ we construct to prove the lower bound for $T(n)$
also show some other tight lower bounds.
Both having $n$ vertices, $G_n$ and $H_n$ can be distinguished
in 2-variable counting logic only with quantifier depth $(1-o(1))n$.
It follows that \emph{color refinement}, the classical procedure used
in isomorphism testing and other areas for computing the coarsest
equitable partition of a graph, needs $(1-o(1))n$ rounds to achieve
color stabilization on each of $G_n$ and $H_n$. Somewhat surprisingly,
this number of rounds is not enough for
color stabilization on the disjoint union of $G_n$ and $H_n$,
where $(2-o(1))n$ rounds are needed.
\end{abstract}

\section{Introduction}

A homomorphism from a connected graph $H$ onto a graph $G$
is called a \emph{covering map} if it is a bijection
in the neighborhood of each vertex of $H$.
In this case, we say that $H$ is a \emph{cover} of $G$ or that $H$ \emph{covers} $G$. 
Given a vertex $x$ of $G$, let $U_x(G)$ 
denote the unfolding of $G$ into a tree starting from $x$.
This tree is called the \emph{universal cover} of $G$
because $U_x(G)$ covers every cover $H$ of $G$; see examples in Fig.~\ref{fig:simple-example}.

These notions appeared in algebraic and topological graph theory \cite{Biggs,CvetkovicDS,Massey,Reidemeister},
where they are helpful, for instance, in factorization
of the characteristic polynomial of a graph \cite{Sachs64}
or in classification of projective planar graphs \cite{Negami88}.
Further applications were found in such diverse areas as
finite automata theory \cite{Moore56,Norris95},
combinatorial group theory  \cite{Stallings83,Stillwell},
finite model theory \cite{Otto13},
construction of expander graphs \cite{AmitL06,BiluL06}
and, maybe most noticeably, in
distributed computing \cite{Angluin80,Bodlaender89,FischerLM86,YamashitaK88}
(see also the surveys \cite{Kranakis97,LitovskyMS99}).
The problem we consider in this paper arose in the last area.

In theory of distributed systems, a synchronous network of anonymous processors
is presented as a graph $G$ where, in a unit of time, two processors exchange
messages if the corresponding vertices are connected by an edge.
The processors are supposed to have unlimited computational power (for example,
being automata with unbounded number of states).
The processors at vertices of the same degree execute the same program.
They are identical at the beginning, but later can have different states if
they receive different messages from their neighbors. 
All processors have a common goal of arriving at a specified configuration of their states.
For example, the \emph{leader election} problem is to ensure that exactly one of the
processors comes in the distinguished state ``elected'' while all others
come in the ``unelected'' state.

For an integer $t\ge0$, let $U^t_x(G)$ be the truncation of $U_x(G)$ at depth $t$. 
Two processors $x$ and $y$ are indistinguishable by their states up until time $t$
if $U^t_x(G)\cong U^t_y(G)$, where $\cong$ denotes isomorphism of rooted trees.
In particular, none of $x$ and $y$ can be elected in time $t$ as a leader.
Moreover, leader election is possible only if $G$ contains a vertex $x$
such that $U_x(G)\not\cong U_y(H)$ for any other $y$ (see, e.g., the discussion in~\cite{BoldiV02}).

Another archetypical problem in distributed computing is \emph{network topology recognition},
that is, identification of the isomorphism type of the underlying graph $G$ or,
at least, checking if $G$ has a specified property $\mathcal P$.
Angluin \cite{Angluin80} observed that, if $H$ covers $G$, then
these two graphs are indistinguishable in the above model of distributed computation
(and in other natural models of local computations). It follows that
a class of graphs $\mathcal P$ is recognizable only if it is closed under coverings.  
This basic observation is used, for example, in \cite{CourcelleM94} 
where it is shown that, except for a few special cases,
minor-closed classes of graphs are not closed under covers and, hence,
cannot be recognized.

Two networks $G$ and $H$ cannot be distinguished in time $t$ by the states
of their processors $x\in V(G)$ and $y\in V(H)$ if $U^t_x(G)\cong U^t_y(H)$.
Suppose that each of $G$ and $H$ has at most $n$ nodes. What is the minimum time
$T=T(n)$ such that $U^T_x(G)\cong U^T_y(H)$ surely implies $U_x(G)\cong U_y(H)$?
In other words, $T(n)$ is the minimum time that suffices to distinguish $x$ and $y$ 
whenever at all possible. 

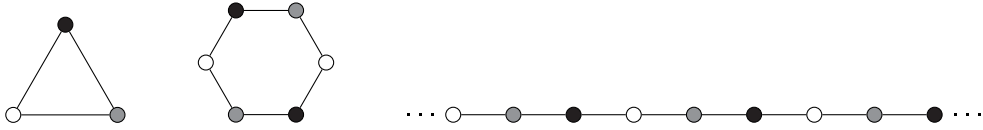
\begin{figure}
\centering
 \begin{tikzpicture}[scale=-.8]
 \path (30:1cm) node[vertex] (a)   {}
                 (150:1cm) node[vvertex] (b)   {} edge (a)
                 (270:1cm) node[vvvertex] (c)   {} edge (a) edge (b);
 \end{tikzpicture}
\qquad
 \begin{tikzpicture}[scale=.8]
 \path (0:1cm) node[vertex] (a)   {}
                 (60:1cm) node[vvertex] (b)   {} edge (a)
                 (120:1cm) node[vvvertex] (c)   {} edge (b)
                 (180:1cm) node[vertex] (d)   {} edge (c)
                 (240:1cm) node[vvertex] (e)   {} edge (d)
                 (300:1cm) node[vvvertex] (f)   {} edge (a) edge (e)
;
 \end{tikzpicture}
\qquad
 \begin{tikzpicture}[scale=.8]
  \path 
                 (-.75,0) node[inner sep=.5pt,fill=black] () {}
                 (-.55,0) node[inner sep=.5pt,fill=black] () {}
                 (-.35,0) node[inner sep=.5pt,fill=black] () {}                
                   (0,0) node[vertex] (a0)   {}
                 ++(1,0) node[vvertex] (b0)   {} edge (a0)
                 ++(1,0) node[vvvertex] (c0)   {} edge (b0)
                 ++(1,0) node[vertex] (a1) {} edge (c0)
                 ++(1,0) node[vvertex] (b1)   {} edge (a1)
                 ++(1,0) node[vvvertex] (c1)   {} edge (b1)
                 ++(1,0) node[vertex] (a2) {} edge (c1)
                 ++(1,0) node[vvertex] (b2)   {} edge (a2)
                 ++(1,0) node[vvvertex] (c2)   {} edge (b2)
                ++(.35,0) node[inner sep=.5pt,fill=black] () {}
                 ++(.2,0) node[inner sep=.5pt,fill=black] () {}
                 ++(.2,0) node[inner sep=.5pt,fill=black] () {}
;
 \end{tikzpicture}
\caption{The cycle $C_6$ covers the cycle $C_3$. The infinite path is
the universal cover of both cycles. The corresponding covering maps
are visualized by colors of vertices.}
\label{fig:simple-example}
\end{figure}

Norris \cite{Norris95} considers the case when $G=H$ and shows that then $T(n)\le n-1$,
improving upon an earlier bound of $n^2$ in \cite{YamashitaK88}
and thereby increasing the performance of several distributed algorithms.
In the same paper, she asks if $T(n)\le n$ in the general case of two graphs $G$ and $H$.
We answer this question in the negative by establishing that $T(n)=(2-o(1))n$.
The upper bound $T(n) < 2n$ is standard, and our main contribution is a construction
of graphs $G$ and $H$
showing a lower bound $T(n)\ge 2n-16\sqrt n$. 

The bound of $2n$ is a standard upper bound for the communication
round complexity of the distributed algorithms that are based on
computing the isomorphism type of the universal cover $U_x(G)$
or on related concepts; see Tani \cite{Tani12}.
Our result implies that this bound is tight up to a term of $o(n)$
for any algorithm that aims at gaining all knowledge about the
network $G$ available to a particular party~$x$.

Our solution of Norris's problem uses, perhaps for the first time
in the area of distributed computing, a conceptual framework
that was created in finite model theory. Specifically,
our proof of the lower bound for $T(n)$ makes use of
a bisimulation version of the 2-pebble counting game \cite{ImmermanL90},
that was used for diverse purposes in \cite{GraedelO99,AtseriasM13}.
The interplay between the two areas is discussed in more detail below.

Note that Norris in \cite{Norris95} considers directed graphs and
allows multiple edges and loops.
This setting is more general and as well important for modelling of distributed systems:
Undirected graphs we consider in this paper 
just correspond to networks with bidirectional communication channels.
It should be stressed that our lower bound for $T(n)$, while shown for undirected graphs,
holds true in the setting of \cite{Norris95} by considering a simple orientation of
the constructed graphs; see Remark~\ref{rem:orgraphs}.

\smallskip

\noindent
{\bf Relation to color refinement.}
A coloring of the vertex set of an undirected graph $G$ is called \emph{stable} 
if any two equally colored vertices have the same number of neighbors of each color.
In algebraic graph theory \cite{GodsilR}, the corresponding partition of $V(G)$ is called \emph{equitable}. 
The well-known \emph{color refinement} procedure begins with a uniform coloring of $V(G)$
and refines it step by step so that, if two vertices currently have equal colors but
differently colored neighborhoods, then their new colors are different.
The procedure terminates as soon as no further refinement is possible.
The coloring obtained in this way is stable and gives a unique coarsest equitable partition of $V(G)$.
More efficient implementations of color refinement have been developed in the literature
starting from Hopcroft's work on minimization of finite automata \cite{Hopcroft71}; see \cite{BerkholzBG13}
and references therein.
Even in the simplest version, color refinement is used in isomorphism testing as 
a very efficient way to compute a canonical labeling for almost all input graphs 
(Babai, Erd{\"o}s, and Selkow \cite{BabaiES80}).
Angluin \cite{Angluin80}
uses color refinement to decide if two given graphs have a common cover.
The last condition is important because, as follows from the discussion above, 
such graphs are indistinguishable by local computations.

Using the relationship between universal covers and color refinement observed in \cite{Angluin80},
our bound for $T(n)$ can be recasted as a result about the basic color refinement procedure. 
Let $\stabi G$ denote the number of refinements
made on the input $G$ till stabilization. 
The graphs $G_n$ and $H_n$ we construct to solve Norris's problem show that
the obvious upper bound $\stabi G<n$, where $n$ is the number of vertices in $G$,
is tight. Both $G_n$ and $H_n$  have $n$ vertices,
and both $\stabi{G_n}$ and $\stabi{H_n}$ are $(1-o(1))n$.
Moreover, the disjoint union $G_n\cup H_n$ of these graphs demonstrates
a counterintuitive phenomenon: $\stabi{G_n\cup H_n}=(2-o(1))n$, which means that
stabilization of the disjoint union can require much more refinement rounds
than stabilization of each component.

\smallskip

\noindent
{\bf Relation to two-variable counting logic.}
As it is already mentioned,
our main technical tool is a version of the 2-pebble counting game
introduced by Immerman and Lander \cite{ImmermanL90}
for analysis of the expressibility of \twoclogic,
first-order logic with two variables and counting quantifiers. Let $D(G)$ denote
the minimum quantifier depth of a formula defining the equivalence type
of a graph $G$ in \twoclogic. 
The connection to color refinement established by Immerman and Lander in \cite{ImmermanL90}
can be used to show that $D(G)\le \stabi G+2\le n+1$ for $G$ with $n$ vertices.
Our construction yields graphs asymptotically attaining this bound. 
While a linear lower bound of $D(G)\ge n/2-O(1)$
can be easily seen by considering $G=P_n$, the path graph on $n$ vertices, the exact
asymptotics of the maximum value of $D(G)$ is determined for the first time. Note also that
the parameter $D(G)$ is very small in the average case. 
As follows from \cite{BabaiES80}, $D(G)\le \stabi G+2\le 4$ for almost all $G$ on $n$ vertices.

\smallskip

\noindent
{\bf Related work.}
A \emph{fibration} is an analog of a covering for directed graphs, possibly with loops
and multiple edges, where
the local bijectivity property is required only on \emph{in}-arrows.
Boldi and Vigna \cite{BoldiV02} discuss the Norris's question for fibrations and 
notice that the lower bound $T(n)\ge 2n-2$ can in this setting be obtained
by considering the following simple example of digraphs $G$ and $H$ 
along with vertices $x\in V(G)$ and $y\in V(H)$.
Both $G$ and $H$ are obtained from
the bidirectional path on the vertices $1,2,\ldots,n$ by cloning the arrow from $n-1$ to $n$
in $G$ and by adding a loop at $n$ in $H$; furthermore, $x=1$ and $y=1$.
The overall idea behind our construction is actually similar, though its
implementation without loops and multiple edges 
(and with the bijectivity constraint on the entire neighborhood of a vertex)
is not so simple.

In a more realistic model of a bidirectional network, 
an undirected graph $G$ is endowed with a \emph{port-numbering}. 
This is a labeling that determines an order on
the set of the incident edges for each node. Hendrickx \cite{Hendrickx13} proves in this setting%
\footnote{%
In fact, in place of the universal covers Hendrickx \cite{Hendrickx13} considers
the related concept of a \emph{view} introduced by Yamashita and Kameda
in \cite{YamashitaK88}. While the vertices of the universal cover $U_x(G)$
can be identified with the non-backtracking walks in $G$ starting at $x$,
the vertices of the view of $x\in V(G)$ correspond to all (not necessarily non-backtracking)
walks in $G$ starting at $x$. Isomorphism of truncated views is equivalent to
isomorphism of truncated universal covers because both can be characterized
by color refinement as in Lemma~\ref{lem:UvsC}.
}
that $U^t_x(G)\cong U^t_y(G)$ implies $U_x(G)\cong U_y(G)$ for $t=O(d+d\log(n/d))$
where $d$ denotes the diameter of $G$. This bound can be preferable to Norris's bound
of $n-1$ if $d=o(n)$. The optimality of this bound is shown by Dereniowski, Kosowski, 
and Pajak~\cite{DereniowskiKP14}.
Note that our solution of Norris's problem can be extended also to
port-numbered graphs; see Remark~\ref{rem:port-num}.

The results on the computational complexity of deciding if $H$ covers $G$ for two given graphs
are surveyed in~\cite{FialaK08}.

\smallskip

\noindent
{\bf Organization of the paper.}
The connection between universal covers and color refinement 
is explored in Section \ref{s:refinement}.
A key technical role is here played by Lemma \ref{lem:trees},
which is a kind of a reconstructibility result for
rooted trees (cf.\ \cite{Kelly57,Nesetril71}).
Once this connection is established (Lemma \ref{lem:UvsC}), it 
readily yields the upper bound $T(n) < 2n$ (Lemma \ref{lem:twon}).
The lower bound for $T(n)$ is obtained in Section \ref{s:Norris} (Theorem \ref{thm:nancy}).
The proof uses the relationship between
color refinement and the bisimulation version of
the 2-pebble Immerman-Lander game \cite{ImmermanL90}.
The core of the proof is a construction of $n$-vertex graphs $G$ and $H$
containing vertices $u$ and $v$ respectively such that,
while Spoiler can win starting from the position $(u,v)$,
Duplicator can resist during $(2-o(1))n$ rounds.
In Section \ref{s:logic} we analyse the original version
of the Immerman-Lander game on the same graphs and determine
the maximum quantifier depth needed to define
the equivalence type of a graph on $n$ vertices in \twoclogic (Theorem \ref{thm:qdepth}).
Theorem \ref{Phard} in the same section is obtained as a by-product
and stated for expository purposes. It shows another connection between
universal covers and logic: 
Deciding  if given graphs have a common cover is \p-hard by a reduction
from the \twoclogic-equivalence problem, whose \p-hardness
is established by Grohe \cite{Grohe99}.
This reduction is implicitly contained in \cite[Theorem 2.2]{RamanaSU94}; see also~\cite{AtseriasM13}.
We conclude with some open questions in Section~\ref{s:open}.

\section{Universal covers and color refinement}\label{s:refinement}

\subsection{Basic definitions and facts}\label{s:defs}

Unless stated otherwise, we consider unlabeled undirected graphs without loops
and multiple edges.
Given a graph $G$, we denote its vertex set by $V(G)$. 
The \emph{neighborhood} of a vertex $v$ consists of
all vertices adjacent to $v$ and is denoted by~$N(v)$.

In this section we consider, along with finite, also infinite graphs.
The following definitions, which are srandard in the finite case,
apply as well to infinite graphs.
A graph is \emph{connected} if any two its vertices can
be connected by a (finite) path.
A graph is \emph{acyclic} if it contains no (finite) cycle.
A (possibly infinite) \emph{tree} is an acyclic connected graph.

All graphs in this section are supposed to be \textbf{connected}.
Let $\alpha$ be a homomorphism from $H$ \emph{onto} $G$.
If $\alpha$ is a bijection from $N(v)$ onto $N(\alpha(v))$
for each $v\in V(H)$, then it is called a \emph{covering map},
and $H$ is called a \emph{covering graph} (or a \emph{cover}) of $G$.
Sometimes we also say that $H$ \emph{covers} $G$.
Note that the following fact holds true both for finite and infinite graphs.

\begin{lemma}\label{lem:iso-over-trees}
If $\alpha$ is a covering map from a connected graph $H$ onto
a tree $T$, then $H$ is a tree and $\alpha$ is an isomorphism from
$H$ to~$T$.  
\end{lemma}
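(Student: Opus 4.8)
The plan is to establish the two assertions in turn. Since $H$ is connected by hypothesis, showing it is acyclic will make it a tree; and once $\alpha$ is known to be a bijection it is automatically an isomorphism, because a bijective homomorphism whose inverse is again a homomorphism preserves and reflects edges, and the inverse will be a homomorphism since $\alpha$ is locally bijective: given $xy\in E(T)$ and the unique $a\in V(H)$ with $\alpha(a)=x$, the bijection $\alpha\colon N(a)\to N(x)$ supplies some $b\in N(a)$ with $\alpha(b)=y$, and then $b=\alpha^{-1}(y)$. Both parts of the proof will rest on one elementary fact about trees that I would isolate first: \emph{every closed walk of length at least $2$ in a tree backtracks}, i.e.\ contains three consecutive vertices $w_{i-1},w_i,w_{i+1}$ with $w_{i-1}=w_{i+1}$. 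To see this, root the tree at the start of the walk, record the sequence of distances from the root along the walk (consecutive entries differ by exactly $1$), and let $i$ be an index strictly between the endpoints at which this sequence is maximal; then both walk-neighbours of $w_i$ lie at distance one less than $w_i$, and in a tree a vertex has a unique neighbour closer to the root, so those two neighbours coincide.

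For acyclicity, suppose $H$ contained a cycle $p_0p_1\cdots p_{n-1}p_0$ with $n\ge 3$. Because $\alpha$ maps edges to edges (and edges of a simple graph join distinct vertices), $\alpha(p_0),\alpha(p_1),\dots,\alpha(p_{n-1}),\alpha(p_0)$ is a genuine closed walk of length $n\ge 2$ in $T$. By the fact above it backtracks somewhere: $\alpha(p_{i-1})=\alpha(p_{i+1})$ for some $i$ (indices mod $n$). But $p_{i-1}$ and $p_{i+1}$ are distinct vertices — distinct precisely because the cycle has length at least $3$ — and both lie in $N(p_i)$, contradicting injectivity of $\alpha$ on $N(p_i)$. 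Hence $H$ is acyclic, so it is a tree.

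For injectivity, suppose $\alpha(u)=\alpha(u')$ with $u\ne u'$, and take the path $u=p_0,p_1,\dots,p_m=u'$ joining them in the tree $H$, so $m\ge 1$. The image walk is closed of length $m$. If $m=1$ then $\alpha(p_0)\alpha(p_1)\in E(T)$, forcing $\alpha(p_0)\ne\alpha(p_1)$, a contradiction; if $m\ge 2$ the tree fact yields an index $j$ strictly between the endpoints with $\alpha(p_{j-1})=\alpha(p_{j+1})$, and $p_{j-1}\ne p_{j+1}$ because all the $p_i$ are distinct, contradicting local injectivity of $\alpha$ at $p_j$. So $\alpha$ is injective, hence (being also onto and locally bijective) an isomorphism.

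I do not expect a real obstacle; the only places that need a little care are the bookkeeping in the tree fact — making sure the maximising index can be taken strictly between the two ends of the closed walk, so its two walk-neighbours are well defined — and the observation that in a cycle of length at least $3$ a vertex's two cycle-neighbours are genuinely distinct, which is exactly what makes the local-injectivity contradiction fire.
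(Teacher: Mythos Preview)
Your proof is correct. The acyclicity argument is essentially the same as the paper's---both observe that the image of a cycle under a covering map is a closed non-backtracking walk, which a tree cannot contain---but you spell out the supporting ``tree fact'' explicitly, whereas the paper just asserts it.

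For injectivity the two proofs diverge. The paper fixes a basepoint $x\in V(H)$ and shows by induction on $i$ that the restriction of $\alpha$ to the ball of radius $i$ around $x$ is injective; this argument does not rely on first knowing that $H$ is a tree. Your argument instead uses the acyclicity just established: take the path in $H$ between two alleged preimages, note its image is a closed walk in $T$, and reapply the same tree fact to force a backtrack that contradicts local injectivity. Your route is more unified---one lemma handles both parts---while the paper's ball-induction is slightly more self-contained in that it would still run without the acyclicity step. Both are short and elementary; neither has any real advantage over the other.
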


\begin{proof}
If $H$ had a cycle, $\alpha$ would take it to a closed non-backtracking walk in $T$,
which is impossible because such a walk must contain a cycle. 
Next we show that $\alpha$ is injective.
Fix a vertex $x$ in $H$ and let $\alpha^i$ denote the restriction of $\alpha$
to the vertices of $H$ at the distance at most $i$ from $x$. 
For every $i\ge0$, $\alpha^i$ is injective; this follows by
induction on $i$. Therefore, $\alpha$ is injective. Finally, note that 
an injective covering map is an isomorphism because it always takes 
a pair of non-adjacent vertices to a pair of non-adjacent vertices.
\end{proof}

Given a connected graph $G$ and a vertex $x\in V(G)$, define a graph $U_x(G)$ as follows.
The vertex set of $U_x(G)$ consists of non-backtracking walks in $G$
starting at $x$, that is, of sequences $(x_0,x_1,\ldots,x_k)$ such that
$x_0=x$, $x_i$ and $x_{i+1}$ are adjacent, and $x_{i+1}\ne x_{i-1}$.
Two such walks are adjacent in $U_x(G)$ if one of them extends the other
by one component, that is, one is $(x_0,\ldots,x_k,x_{k+1})$
and the other is $(x_0,\ldots,x_k)$. 
Notice the following properties
of this construction.

\begin{lemma}\label{lem:UxG}\mbox{}
 \begin{enumerate}
\item[\bf 1.]
$U_x(G)$ is a tree.
\item[\bf 2.] 
The map $\gamma_G$ defined by $\gamma_G(x_0,\ldots,x_k)=x_k$
is a covering map from $U_x(G)$ to~$G$.
\item[\bf 3.]
If $\alpha$ is a covering map from $H$ onto $G$, then
there is a covering map $\beta$ from $U_x(G)$ onto $H$
such that $\gamma_G=\alpha\circ\beta$.
\end{enumerate} 
\end{lemma}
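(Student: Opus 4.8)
The plan is to verify the three assertions directly from the definition of $U_x(G)$. For the first, note that $U_x(G)$ is connected because every walk $(x_0,\ldots,x_k)$ is joined to the root $(x)=(x_0)$ through its successive prefixes $(x_0,\ldots,x_k),(x_0,\ldots,x_{k-1}),\dots,(x_0)$, each a vertex of $U_x(G)$ since a prefix of a non-backtracking walk is again non-backtracking. For acyclicity I would assign to a vertex $(x_0,\ldots,x_k)$ the \emph{level} $k$, observe that every edge joins vertices of consecutive levels, and that a vertex of level $k\ge1$ has a unique neighbor of level $k-1$, namely its prefix $(x_0,\ldots,x_{k-1})$. A cycle would contain a vertex of maximal level, whose two cycle-neighbors would both have to lie at the preceding level and would therefore coincide, which is impossible. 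Hence $U_x(G)$ is a connected acyclic graph, i.e.\ a tree.

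For the second assertion I would check the covering-map axioms for $\gamma_G(x_0,\ldots,x_k)=x_k$. It is a homomorphism, since adjacent walks have the form $(x_0,\ldots,x_k)$ and $(x_0,\ldots,x_k,x_{k+1})$ with adjacent images $x_k,x_{k+1}$; it is onto, since $G$ is connected and a shortest walk from $x$ to any prescribed vertex is non-backtracking and hence a vertex of $U_x(G)$. For local bijectivity at $w=(x_0,\ldots,x_k)$, the neighbors of $w$ are its prefix $(x_0,\ldots,x_{k-1})$, present when $k\ge1$, together with the one-step extensions $(x_0,\ldots,x_k,y)$ for $y\in N(x_k)\setminus\{x_{k-1}\}$; because $x_{k-1}\in N(x_k)$, the $\gamma_G$-images of these neighbors run over $N(x_k)$, each value attained exactly once.

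For the third assertion I would realize $\beta$ by lifting walks. Fix $\tilde x\in V(H)$ with $\alpha(\tilde x)=x$, which exists because $\alpha$ is onto. Given $w=(x_0,\ldots,x_k)\in V(U_x(G))$, set $\tilde x_0=\tilde x$ and, inductively, let $\tilde x_{i+1}$ be the unique vertex of $N(\tilde x_i)$ with $\alpha(\tilde x_{i+1})=x_{i+1}$; it exists and is unique since $\alpha$ restricts to a bijection $N(\tilde x_i)\to N(x_i)$ and $x_{i+1}\in N(x_i)$. Put $\beta(w)=\tilde x_k$. Then $\alpha\circ\beta=\gamma_G$ by construction, and since the lift of a prefix of $w$ is a prefix of the lift of $w$, $\beta$ sends adjacent walks to adjacent vertices of $H$ and is thus a homomorphism. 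It is onto: for $\tilde v\in V(H)$ the $\alpha$-image of a shortest $\tilde x$--$\tilde v$ path in $H$ is a non-backtracking walk $w$ in $G$ (otherwise injectivity of $\alpha$ on a suitable neighborhood would force that path to backtrack), and uniqueness of lifts gives $\beta(w)=\tilde v$. For local bijectivity at $w$ with $\tilde x_k=\beta(w)$: $\beta$ maps $N(w)$ into $N(\tilde x_k)$, and composing $\beta|_{N(w)}$ with the bijection $\alpha|_{N(\tilde x_k)}\colon N(\tilde x_k)\to N(x_k)$ gives the bijection $\gamma_G|_{N(w)}\colon N(w)\to N(x_k)$ from the second assertion; a map that becomes a bijection after composition with the bijection $\alpha|_{N(\tilde x_k)}$ must itself be a bijection onto $N(\tilde x_k)$. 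Hence $\beta$ is a covering map with $\gamma_G=\alpha\circ\beta$.

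The first two assertions are routine, and I expect the only delicate point to be in the third: one must keep track of the fact that lifts and projections of non-backtracking walks remain non-backtracking — in particular that a shortest path projects to a non-backtracking walk — since this is what makes $\beta$ well defined and surjective. With that settled, the verification that $\beta$ is locally bijective is the short diagram chase sketched above.
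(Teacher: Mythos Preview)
Your proof is correct. Parts 1 and 2 are handled just as the paper intends (the paper in fact skips them entirely as routine), and your treatment is clean.

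For Part 3 your argument and the paper's reach the same map $\beta$ but verify its properties differently. You construct $\beta$ by explicit walk-lifting and check surjectivity and local bijectivity from scratch, using the diagram chase with $\gamma_G=\alpha\circ\beta$ and Part 2. The paper instead builds the map $\bar\alpha\colon U_y(H)\to U_x(G)$, $(y_0,\dots,y_k)\mapsto(\alpha(y_0),\dots,\alpha(y_k))$, observes it is a covering map between trees, and invokes Lemma~\ref{lem:iso-over-trees} to conclude $\bar\alpha$ is an isomorphism; then $\beta=\gamma_H\circ\bar\alpha^{-1}$ is automatically a covering map as the composition of an isomorphism with $\gamma_H$, and $\gamma_G=\alpha\circ\beta$ follows from $\gamma_G\circ\bar\alpha=\alpha\circ\gamma_H$. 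Your $\beta$ is exactly this composite (your step-by-step lift is precisely $\bar\alpha^{-1}$), so the two arguments agree at the level of objects; the paper's route is shorter because it reuses Lemma~\ref{lem:iso-over-trees} rather than redoing the local-bijectivity and surjectivity checks, while yours is more self-contained and makes explicit why non-backtracking is preserved under lifting and projection.
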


\begin{proof}
We skip Parts 1 and 2 that can be shown by a direct argument.
To prove Part 3, fix a vertex $y$ of $H$ such that $\alpha(y)=x$.
Define $\bar\alpha(y_0,\ldots,y_k)=(\alpha(y_0),\ldots,\alpha(y_k))$
and note that $\bar\alpha$ is a covering map from $U_y(H)$ to $U_x(G)$.
Since the two graphs are trees, $\bar\alpha$ is an isomorphism by Lemma \ref{lem:iso-over-trees}.
Note that $\gamma_G\circ\bar\alpha=\alpha\circ\gamma_H$.
Thus, we can set $\beta=\gamma_H\circ\bar\alpha^{-1}$.  
\end{proof}

Call $U$ a \emph{universal cover} of $G$ if $U$ covers
any covering graph of $G$. Lemma \ref{lem:UxG}.3 implies that $U_x(G)$
is a universal cover of $G$.
The next lemma shows that we could define the universal cover of $G$,
uniquely up to isomorphism, as a tree covering~$G$.

\begin{lemma}\label{lem:univ-cover}\mbox{}
\begin{enumerate}
\item[\bf 1.] 
All universal covers of $G$ are isomorphic trees.
\item[\bf 2.] 
Any tree covering $G$ is a universal cover of $G$.
\end{enumerate}
\end{lemma}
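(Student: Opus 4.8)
The plan is to reduce both parts to the facts already established about the specific universal cover $U_x(G)$. For Part~1, let $U$ and $U'$ be two universal covers of $G$. Since $U$ covers every covering graph of $G$, it covers $U'$; symmetrically $U'$ covers $U$. Now fix any vertex $x$ of $G$ and consider $U_x(G)$, which by Lemma~\ref{lem:UxG} is a tree covering $G$ and is itself a universal cover. Applying the universality of $U$ to the covering graph $U_x(G)$, there is a covering map from $U$ onto $U_x(G)$; since $U_x(G)$ is a tree, Lemma~\ref{lem:iso-over-trees} upgrades this to an isomorphism, so $U\cong U_x(G)$. The same argument gives $U'\cong U_x(G)$, hence $U\cong U'$, and in particular every universal cover is a tree.

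For Part~2, let $T$ be any tree covering $G$, say via a covering map $\alpha$. I want to show $T$ is a universal cover, i.e.\ that $T$ covers every covering graph $H$ of $G$. Pick a vertex $t$ of $T$ with $\alpha(t)=x$. Lemma~\ref{lem:UxG}.3, applied with the covering map $\alpha\colon T\to G$ in the role of "$H\to G$", is not quite in the right form; instead I apply Lemma~\ref{lem:UxG}.3 in the opposite direction: the argument in the proof of that lemma shows that for a covering map $\alpha$ from $T$ onto $G$ and the chosen basepoint $t$, the map $\bar\alpha(t_0,\dots,t_k)=(\alpha(t_0),\dots,\alpha(t_k))$ is a covering map from $U_t(T)$ onto $U_x(G)$, hence an isomorphism by Lemma~\ref{lem:iso-over-trees}. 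But $T$ itself is a tree, so $U_t(T)\cong T$ — indeed the covering map $\gamma_T\colon U_t(T)\to T$ from Lemma~\ref{lem:UxG}.2 is an isomorphism by Lemma~\ref{lem:iso-over-trees}. Composing, $T\cong U_x(G)$, and since $U_x(G)$ is a universal cover of $G$ by Lemma~\ref{lem:UxG}.3, so is $T$.

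The main thing to be careful about is the direction of the covering maps and making sure each invocation of Lemma~\ref{lem:iso-over-trees} has a tree as its codomain: the lemma says a connected graph covering a tree must equal that tree, so I consistently want the tree ($U_x(G)$, or $T$, or $U_x(G)$ again) sitting on the receiving end. Once the key observation $T\cong U_x(G)$ for every tree $T$ covering $G$ is in hand — which is really just Lemma~\ref{lem:UxG} plus Lemma~\ref{lem:iso-over-trees} — both parts follow immediately, since "universal cover of $G$" and "tree covering $G$" are then both characterized, up to isomorphism, as "isomorphic to $U_x(G)$". I do not expect any genuine obstacle here; the content is entirely in the two preceding lemmas, and this statement is essentially bookkeeping that packages them into the clean characterization used later.
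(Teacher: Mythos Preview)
Your proof is correct and, for Part~1, essentially identical to the paper's: both route every universal cover through the known tree $U_x(G)$ via Lemma~\ref{lem:iso-over-trees}.

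For Part~2 you take a slightly longer path than the paper. You reconstruct the isomorphism $T\cong U_x(G)$ by reopening the proof of Lemma~\ref{lem:UxG}.3 (the $\bar\alpha$ map) and combining it with $U_t(T)\cong T$. The paper's argument is more direct and uses only statements already established: take any universal cover $U$ of $G$; by definition $U$ covers $T$, and since $T$ is a tree, Lemma~\ref{lem:iso-over-trees} gives $U\cong T$, so $T$ inherits universality from $U$. This avoids re-entering the proof of Lemma~\ref{lem:UxG} and makes the dependence on prior lemmas cleaner. Your version has the minor advantage of exhibiting the isomorphism $T\cong U_x(G)$ explicitly, but the paper's one-line application of the definition is the more economical move here.
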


\begin{proof}
  {\bf 1.}
By definition, all universal covers of $G$ cover each other.
Since one of them, namely $U_x(G)$, is a tree, the claim follows
from Lemma~\ref{lem:iso-over-trees}.

{\bf 2.}
Assume that a tree $T$ covers $G$. Let $U$ be a universal cover of $G$.
Then $U$ covers $T$, and the two graphs are isomorphic by Lemma \ref{lem:iso-over-trees}.
Like $U$, the tree $T$ is therefore a universal cover of~$G$. 
\end{proof}

\begin{lemma}\label{lem:UGUH}\mbox{}
  \begin{enumerate}
  \item[\bf 1.] 
  If $C$ covers $G$, then their universal covers $U_C$ and $U_G$ are isomorphic.
\item[\bf 2.] 
If $G$ and $H$ have a common covering graph, then their universal covers $U_G$ and $U_H$ are isomorphic.
  \end{enumerate}
\end{lemma}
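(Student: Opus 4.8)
The plan is to obtain both parts as short corollaries of the machinery already in place, chiefly Lemma~\ref{lem:UxG}.3 together with Lemma~\ref{lem:univ-cover}; no new construction is needed.

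For Part~1, I would begin with a covering map $\alpha$ witnessing that $C$ covers $G$, fix a vertex $x$ of $G$, and write $U_G$ for $U_x(G)$. Lemma~\ref{lem:UxG}.3, applied with $H$ replaced by $C$, yields a covering map from $U_G$ onto $C$; since $U_G$ is a tree by Lemma~\ref{lem:UxG}.1, Lemma~\ref{lem:univ-cover}.2 tells us that $U_G$ is in fact a universal cover of $C$. As any two universal covers of the same graph are isomorphic (Lemma~\ref{lem:univ-cover}.1), it follows that $U_C\cong U_G$, which is exactly Part~1.

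For Part~2, let $K$ be a graph that covers both $G$ and $H$. Applying Part~1 to the covering of $G$ by $K$ gives $U_K\cong U_G$, and applying it to the covering of $H$ by $K$ gives $U_K\cong U_H$; hence $U_G\cong U_H$ by transitivity of isomorphism.

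I do not expect a genuine obstacle here: the substantive facts — that a connected cover of a tree is isomorphic to that tree, and that $U_x(G)$ is simultaneously a tree and a universal cover of $G$ — have already been isolated in Lemmas~\ref{lem:iso-over-trees} and~\ref{lem:UxG}. The only points requiring a moment's care are the direction of the covering maps when invoking Lemma~\ref{lem:UxG}.3 (so that $U_G$ ends up covering $C$, and not the reverse), and the harmless remark that ``being a universal cover'' is, by its very definition, an isomorphism-invariant property, so that it may be transported along the isomorphisms supplied by Lemma~\ref{lem:univ-cover}.1.
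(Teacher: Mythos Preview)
Your proof is correct and follows essentially the same strategy as the paper: show that a universal cover of one of the two graphs is a tree covering the other, conclude via Lemma~\ref{lem:univ-cover}.2 that it is a universal cover of both, and then invoke uniqueness (Lemma~\ref{lem:univ-cover}.1). The only cosmetic difference is the direction chosen in Part~1: the paper shows that $U_C$ covers $G$ by composing the covering maps $U_C\to C\to G$, whereas you show that $U_G$ covers $C$ by invoking Lemma~\ref{lem:UxG}.3; Part~2 is identical in both proofs.
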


\begin{proof}
{\bf 1.}
By Lemma \ref{lem:univ-cover}.1, $U_C$ is a tree.
Since $U_C$ covers $C$ and $C$ covers $G$, $U_C$ covers $G$.
By Lemma \ref{lem:univ-cover}.2, $U_C$ is a universal cover of $G$
and, therefore, is isomorphic to~$U_G$.

{\bf 2.} 
Consider a common cover $C$ of $G$ and $H$ and apply Part~1.
\end{proof}

Lemma \ref{lem:UGUH} shows that two 
connected graphs have a common covering graph if and only if their universal covers are isomorphic.
It is known (Leighton \cite{Leighton82})
that graphs have a common covering graph if and only if they
have a common \emph{finite} covering graph.

\hide{
Unless stated otherwise, we consider unlabeled undirected graphs without loops
and multiple edges.
Given a graph $G$, we denote its vertex set by $V(G)$. 
The \emph{neighborhood} of a vertex $v$ consists of
all vertices adjacent to $v$ and is denoted by $N(v)$.
All graphs in this section are supposed to be \textbf{connected}.
Let $\alpha$ be a homomorphism from $H$ \emph{onto} $G$.
If $\alpha$ is a bijection from $N(v)$ onto $N(\alpha(v))$
for each $v\in V(H)$, then it is called a \emph{covering map},
and $H$ is called a \emph{covering graph} (or a \emph{cover}) of $G$.
Sometimes we also say that $H$ \emph{covers} $G$.

Given a connected graph $G$ and a vertex $x\in V(G)$, define a graph $U_x(G)$ as follows.
The vertex set of $U_x(G)$ consists of all non-backtracking walks in $G$
starting at $x$, that is, of sequences $(x_0,x_1,\ldots,x_k)$ such that
$x_0=x$, $x_i$ and $x_{i+1}$ are adjacent, and $x_{i+1}\ne x_{i-1}$.
Two such walks are \emph{adjacent} in $U_x(G)$ if one of them extends the other
by one component, that is, one is $(x_0,\ldots,x_k,x_{k+1})$
and the other is $(x_0,\ldots,x_k)$. 
It is not hard to show that $U_x(G)$ is a tree and that
the map $\gamma_G$ defined by $\gamma_G(x_0,\ldots,x_k)=x_k$
is a covering map from $U_x(G)$ to $G$.
Moreover, if $\alpha$ is a covering map from $H$ onto $G$, then
there is a covering map $\beta$ from $U_x(G)$ onto $H$
such that $\gamma_G=\alpha\circ\beta$.

Call $U$ a \emph{universal cover} of $G$ if $U$ covers
any covering graph of $G$. Thus, $U_x(G)$ is a universal cover of $G$.
In fact, we could alternatively define the universal cover of $G$,
uniquely up to isomorphism, as a tree covering~$G$.

\begin{lemma}\label{lem:UGUH}
Two connected graphs have a common covering graph if and only if their universal covers are isomorphic.\noproof
\end{lemma}
}

\subsection{A reconstruction lemma for rooted trees}

A \emph{rooted tree $T_v$} is a tree with one distinguished vertex $v$, which is
called {\em root}. An isomorphism of rooted trees
should not only preserve the adjacency relation but also map the root
to the root. The {\em depth\/} of $T_v$ is the maximum distance from $v$ to a leaf.
We write $T^i_v$ to denote the subgraph of $T_v$ induced on the vertices
at the distance at most $i$ from $v$. Note that $T^i_v$ inherits the root~$v$.

If $v\in V(G)$, let $G-v$ denote the graph obtained by removing $v$ from $G$ along with all incident edges.
If $w\in N(v)$, then $T_v(w)$ denotes the component of $T_v-v$ 
containing $w$.
The tree $T_v(w)$ will be supposed to be rooted at $w$
and will be referred to as a \emph{branch} of~$T_v$.

There are well-known results about the reconstructibility of the isomorphism type of a tree
from the isomorphism types of some of its proper subgraphs \cite{Kelly57,Nesetril71}. 
We will need a result of this kind showing that a rooted tree is
reconstructible, up to isomorphism, from a family of related rooted trees
of smaller depth.

\begin{lemma}\label{lem:trees}
  Let $T$ and $S$ be 
trees and $x\in V(T)$ and $y\in V(S)$ be their vertices of the same degree with neighborhoods
$N(x)=\{x_1,\ldots,x_k\}$ and $N(y)=\{y_1,\ldots,y_k\}$. Let $r\ge1$.
Suppose that $T^{r-1}_x\cong S^{r-1}_y$ and $T^r_{x_i}\cong S^r_{y_i}$ for all $i\le k$.
Then $T^{r+1}_x\cong S^{r+1}_y$.
\end{lemma}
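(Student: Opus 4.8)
The plan is to reduce the statement to an equality of multisets of truncated branches at the roots $x$ and $y$, proved by a descent on the depth. I will use two elementary facts. \emph{Fact~A}: two rooted trees are isomorphic iff the multisets of isomorphism types of the branches at their roots agree; hence if $R\cong R'$ and some branch of $R$ (equivalently of $R'$) is isomorphic to a rooted tree $B$, then $R$ and $R'$, each with one branch isomorphic to $B$ deleted, remain isomorphic, since each then has branch-multiset equal to that of $R$ with one copy of the type of $B$ removed. \emph{Fact~B}: if $u\in N(v)$ in a tree $T$ and $d\ge1$, then the depth-$d$ truncation of the branch $T_v(u)$ is $T^d_u$ with its branch at $v$ deleted, that branch being the depth-$(d-1)$ truncation of $T_u(v)$. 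Both are immediate from the definitions. Throughout I write $R^d$ for the depth-$d$ truncation of a rooted tree $R$, consistently with the notation $T^i_v$; in particular $(T_x(x_i))^r$ is the $r$-truncation of the branch $T_x(x_i)$. The one real obstacle is that the hypothesis $T^r_{x_i}\cong S^r_{y_i}$ concerns the \emph{ball} around $x_i$, whereas I need to control the branch at $x_i$ in the tree rooted at $x$; getting at the latter requires identifying, inside that ball, the branch pointing back toward $x$, and this is what the auxiliary claim below handles.

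Since $T^{r+1}_x$ is the rooted tree with root $x$ (of degree $k$) and branch-multiset $\{(T_x(x_i))^r : i\le k\}$, and likewise for $S^{r+1}_y$, Fact~A reduces the lemma to showing $(T_x(x_i))^r\cong(S_y(y_i))^r$ for each $i$. By Fact~B, $(T_x(x_i))^r$ is $T^r_{x_i}$ with its branch $(T_{x_i}(x))^{r-1}$ deleted, and $(S_y(y_i))^r$ is $S^r_{y_i}$ with its branch $(S_{y_i}(y))^{r-1}$ deleted; since $T^r_{x_i}\cong S^r_{y_i}$ by hypothesis, Fact~A finishes the job once $(T_{x_i}(x))^{r-1}\cong(S_{y_i}(y))^{r-1}$ is known. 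The branch-multiset of $(T_{x_i}(x))^{r-1}$ at its root $x$ is $\{(T_x(x_j))^{r-2} : j\ne i\}$, and similarly for $S$; moreover the full multisets $\{(T_x(x_j))^{r-2} : j\le k\}$ and $\{(S_y(y_j))^{r-2} : j\le k\}$ coincide, being the branch-multisets of the isomorphic trees $T^{r-1}_x$ and $S^{r-1}_y$. Thus, granting the one extra fact $(T_x(x_i))^{r-2}\cong(S_y(y_i))^{r-2}$, deleting one copy of this type from each full multiset gives $\{(T_x(x_j))^{r-2}:j\ne i\} = \{(S_y(y_j))^{r-2}:j\ne i\}$, whence $(T_{x_i}(x))^{r-1}\cong(S_{y_i}(y))^{r-1}$ by Fact~A and the lemma follows.

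Everything thus reduces to the auxiliary claim: \emph{for every $d\ge0$, if $T^d_x\cong S^d_y$ and $T^{d+1}_{x_i}\cong S^{d+1}_{y_i}$ for all $i\le k$, then $(T_x(x_i))^d\cong(S_y(y_i))^d$ for all $i\le k$} --- the extra fact used above being its instance $d=r-2$ (legitimate when $r\ge2$), whose two hypotheses are truncations of $T^{r-1}_x\cong S^{r-1}_y$ and $T^r_{x_i}\cong S^r_{y_i}$. I prove the claim by induction on $d$, stepping down by two, with immediate base cases $d=0$ (both sides are single vertices) and $d=1$ (each side is a star, and $\deg x_i=\deg y_i$ since $T^1_{x_i}\cong S^1_{y_i}$). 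For $d\ge2$ one repeats the reduction of the preceding paragraph one level lower: $(T_x(x_i))^d$ is $T^d_{x_i}$ with its branch $(T_{x_i}(x))^{d-1}$ deleted, $T^d_{x_i}\cong S^d_{y_i}$ and $T^{d-1}_x\cong S^{d-1}_y$ hold by truncation, and $(T_x(x_i))^{d-2}\cong(S_y(y_i))^{d-2}$ holds by the induction hypothesis at $d-2$ (its hypotheses being truncations of the current ones); the same branch-multiset bookkeeping then yields first $(T_{x_i}(x))^{d-1}\cong(S_{y_i}(y))^{d-1}$ and then $(T_x(x_i))^d\cong(S_y(y_i))^d$, via Facts~A and~B.

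It remains to handle the boundary case $r=1$, which the second paragraph does not literally cover because there $(T_{x_i}(x))^0$ degenerates to a single vertex; but it is trivial, since $T^1_{x_i}\cong S^1_{y_i}$ forces $\deg x_i=\deg y_i$, hence $(T_x(x_i))^1\cong(S_y(y_i))^1$ for all $i$, and then $T^2_x\cong S^2_y$ by Fact~A. Beyond this, the only point that needs care is the index bookkeeping, i.e., making sure the auxiliary claim is invoked at exactly the depth at which $T^{r-1}_x\cong S^{r-1}_y$ and its truncations supply the required equalities of branch-multisets.
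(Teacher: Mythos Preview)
Your proof is correct and follows essentially the same approach as the paper: a step-2 induction on the truncation depth that alternates between the two hypotheses, each time using the multiset characterization of rooted-tree isomorphism to ``delete'' a known branch and identify the complementary piece. The only difference is presentational: the paper phrases the induction in terms of the back-branch $A_0^h=(T_{x_i}(x))^h$ and invokes a WLOG on the form of the isomorphism $\alpha_i$, whereas you phrase it in terms of the forward branch $(T_x(x_i))^d$ and work purely with multisets via your Facts~A and~B; under the identification $h=d-1$ the two chains of implications coincide.
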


\begin{proof}
Note that, for each $i\le k$, $x_i$ and $y_i$ have equal degrees.
Let $A_0,A_1,\ldots,A_m$ be the branches of $T^r_{x_i}$
(that is, the components of $T^r_{x_i}-x_i$ rooted at the neighbors of $x_i$).
The dependence on $i$ in the notation $A_j$ is dropped.
Similarly, let $B_0,B_1,\ldots,B_m$ be the branches of $S^r_{y_i}$.
We suppose that $A_0=T^r_{x_i}(x)$ and $B_0=S^r_{y_i}(y)$, that is,
the roots of $A_0$ and $B_0$ are $x$ and $y$ respectively.

Our task is to prove that $A_0\cong B_0$ for each $i\le k$.
Along with the conditions $T^r_{x_i}\cong S^r_{y_i}$,
this will imply that $T^{r+1}_x(x_i)\cong S^{r+1}_y(y_i)$ for all $i\le k$.
Merging these isomorphisms with 
the map taking $x$ to $y$ and each $x_i$ to $y_i$, 
we will obtain a desired isomorphism from $T^{r+1}_x$ to $S^{r+1}_y$.

Consider an isomorphism $\alpha_i$ from $T^r_{x_i}$ to $S^r_{y_i}$.
If $\alpha_i$ maps $A_0$ onto $B_0$, we are done. Otherwise,
without loss of generality we can assume that $\alpha_i$
maps $A_0$ onto $B_1$, $A_1$ onto $B_0$, and $A_j$ onto $B_j$
for every $2\le j\le m$.

Since both $A_0$ and $B_0$ have depth at most $r-1$, we need to show that
$A_0^{r-1}\cong B_0^{r-1}$, where $A_0^{h}=(A_0)_x^{h}$ and similar notation is used
also for the rooted tree $B_0$. Using induction, we will prove that
$A_0^{h}\cong B_0^{h}$ for all $h=0,1,\ldots,r-1$.

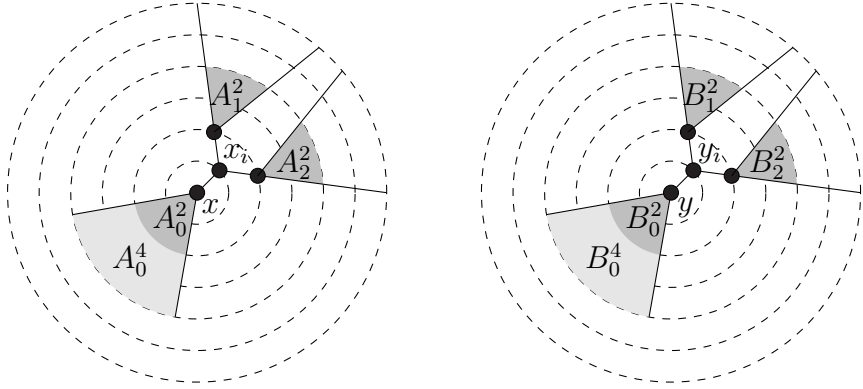
\begin{figure*}
  \centering
\begin{tikzpicture}[every node/.style={circle,draw,inner sep=2pt,fill=Black},scale=.84]
  \begin{scope}[scale=.5]
\draw[dashed] (0,0) circle (1cm) circle (2cm) circle (3cm) circle (4cm) circle (5cm) circle (6cm);
\coordinate (a1) at (.535,1.92);
\coordinate (a2) at (1.92,.535);
\begin{scope}
  \clip (0,0) circle (4cm);
\clip (190:6cm) -- (0,0) -- (260:6cm);
\path[fill=gray!20] (190:6cm) -- (0,0) -- (260:6cm);
\end{scope}
\begin{scope}
  \clip (0,0) circle (2cm);
\clip (190:4cm) -- (0,0) -- (260:4cm);
\path[fill=gray!50] (190:4cm) -- (0,0) -- (260:4cm);
\end{scope}
\begin{scope}
  \clip (0,0) circle (4cm);
\clip (a1) -- (0,6) -- (50:6cm);
\path[fill=gray!50] (a1) -- (0,6) -- (50:6cm);
\end{scope}
\begin{scope}
  \clip (0,0) circle (4cm);
\clip (a2) -- (6,0) -- (40:6cm);
\path[fill=gray!50] (a2) -- (6,0) -- (40:6cm);
\end{scope}

\draw 
(0,0) -- (190:4cm) (0,0) -- (260:4cm) 
(0,0) -- (45:1cm) -- (90:6cm) (45:1cm) -- (0:6cm);
\node at (0,0) {};
\node[draw=none,fill=none,below right] at (0,0) {$x$};
\node at (45:1cm) {};
\node[draw=none,fill=none,above right] at (45:1cm) {$x_i$};
\path (a1) node  {};
\draw (a1) -- (50:6cm);
\path (a2) node {};
\draw (a2) -- (40:6cm);
\node[draw=none,fill=none] at (225:1.2cm) {$A_0^2$};
\node[draw=none,fill=none] at (225:3cm) {$A_0^4$};
\node[draw=none,fill=none] at (73:3.2cm) {$A_1^2$};
\node[draw=none,fill=none] at (17:3.2cm) {$A_2^2$};
  \end{scope}
  \begin{scope}[scale=.5,xshift=15cm]
\draw[dashed] (0,0) circle (1cm) circle (2cm) circle (3cm) circle (4cm) circle (5cm) circle (6cm);
\coordinate (a1) at (.535,1.92);
\coordinate (a2) at (1.92,.535);
\begin{scope}
  \clip (0,0) circle (4cm);
\clip (190:6cm) -- (0,0) -- (260:6cm);
\path[fill=gray!20] (190:6cm) -- (0,0) -- (260:6cm);
\end{scope}
\begin{scope}
  \clip (0,0) circle (2cm);
\clip (190:4cm) -- (0,0) -- (260:4cm);
\path[fill=gray!50] (190:4cm) -- (0,0) -- (260:4cm);
\end{scope}
\begin{scope}
  \clip (0,0) circle (4cm);
\clip (a1) -- (0,6) -- (50:6cm);
\path[fill=gray!50] (a1) -- (0,6) -- (50:6cm);
\end{scope}
\begin{scope}
  \clip (0,0) circle (4cm);
\clip (a2) -- (6,0) -- (40:6cm);
\path[fill=gray!50] (a2) -- (6,0) -- (40:6cm);
\end{scope}

\draw 
(0,0) -- (190:4cm) (0,0) -- (260:4cm) 
(0,0) -- (45:1cm) -- (90:6cm) (45:1cm) -- (0:6cm);
\node at (0,0) {};
\node[draw=none,fill=none,below right] at (0,0) {$y$};
\node at (45:1cm) {};
\node[draw=none,fill=none,above right] at (45:1cm) {$y_i$};
\path (a1) node  {};
\draw (a1) -- (50:6cm);
\path (a2) node {};
\draw (a2) -- (40:6cm);
\node[draw=none,fill=none] at (225:1.2cm) {$B_0^2$};
\node[draw=none,fill=none] at (225:3cm) {$B_0^4$};
\node[draw=none,fill=none] at (73:3.2cm) {$B_1^2$};
\node[draw=none,fill=none] at (17:3.2cm) {$B_2^2$};
  \end{scope}
\end{tikzpicture}  
  \caption{Proof of Lemma \protect\ref{lem:trees}:
The inductive step from $h=2$ to $h+2=4$ ($A_0^2\cong B_0^2\implies A_0^4\cong B_0^4$).
The dashed circles correspond to the metric in $T_x$ and $S_y$.
Note that $T_x^{4}(x_i)$ is rooted at $x_i$ and has branches $A_1^2$ and $A_2^2$.}
  \label{fig:ind-step}
\end{figure*}

The base case of $h=0$ is trivial. Note that it proves the lemma for $r=1$.
Let $r\ge2$ and $h\le r-2$. Assume that $A_0^{h}\cong B_0^{h}$ and deduce
from here that $A_0^{h+2}\cong B_0^{h+2}$
(for $h=r-2$ note that $A^r_0=A_0^{r-1}$ and $B^r_0=B_0^{r-1}$).
Via $\alpha_i$ we get $A_1^{h}\cong B_1^{h}$.
Since we also have $A_j^{h}\cong B_j^{h}$ for all $j\ge2$,
we can combine these isomorphisms 
and obtain an isomorphism
from $T_x^{h+2}(x_i)$ onto $S_y^{h+2}(y_i)$; see Fig.~\ref{fig:ind-step}.
The condition $T^{r-1}_x\cong S^{r-1}_y$ implies that $T_x^{h+2}\cong S_y^{h+2}$.
Since $T^{h+2}_x$ actually consists of $T_x^{h+2}(x_i)$ and $A_0^{h+2}$
and $S^{h+2}_y$ consists of $S_y^{h+2}(y_i)$ and $B_0^{h+2}$, we conclude
that $A_0^{h+2}$ and $B_0^{h+2}$ are also isomorphic.
\end{proof}

\subsection{A relationship between universal covers and color refinement }

A partition $\Pi$ of the vertex set of a graph $G$ is called \emph{equitable} if 
for any elements $X\subseteq V(G)$ and $Y\subseteq V(G)$ of $\Pi$
the following is true: Any two vertices
$u$ and $v$ in $X$ have the same number of neighbors in~$Y$.
A trivial example of an equitable partition is the partition of $V(G)$
into singletons. There is a unique equitable partition $\Pi^*$ that is
the \emph{coarsest} in the sense that any other equitable partition $\Pi$
is a subpartition of $\Pi^*$. 
This partition can be found by
the following \emph{color refinement} procedure.
Define a sequence of colorings $C^i$ of $G$ recursively.
The initial coloring $C^0$ is uniform. Then,
\begin{equation}
  \label{eq:Ci}
 C^{i+1}(u)=\left\langle C^i(u),\msetdef{C^i(a)}{a\in N(u)}\right\rangle, 
\end{equation}
where $\left\langle \ldots \right\rangle$ denotes an ordered pair and
$\left\{\!\!\left\{ \ldots \right\}\!\!\right\}$ denotes a multiset.
Regard $C^i$ as a partition of $V(G)$ (consisting of the monochromatic classes
of vertices) and note that $C^{i+1}$ is a refinement of $C^i$.
It follows that, eventually, $C^{s+1}=C^s$ for some $s$; hence, $C^{i}=C^s$ for any $i\ge s$.
Such a partition $C^s$ is called \emph{stable}. It is easy to see that $C^s$
is equitable, and an inductive argument shows that it is the coarsest; see~\cite[Lemma 1]{CardonC82}.
The minimum $s$ such that $C^s$ is stable will be denoted by~$\stabi G$.

We now show that the color $C^i(v)$ describes the isomorphism type
of the universal cover $U_v(G)$ truncated at depth~$i$.

\begin{lemma}\label{lem:UvsC}
  Let $U$ and $W$ be universal covers of graphs $G$ and $H$ respectively.
Furthermore, let $\alpha$ be a covering map from $U$ to $G$
and $\beta$ be a covering map from $W$ to $H$. Then
$U^i_x\cong W^i_y$ if and only if $C^i(\alpha(x))=C^i(\beta(y))$.
\end{lemma}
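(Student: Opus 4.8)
The plan is to prove the two implications together by induction on $i$, where for each fixed $i$ the assertion is read as quantified over all connected graphs $G,H$, all universal covers $U$ of $G$ and $W$ of $H$ (recall these are trees, Lemma~\ref{lem:univ-cover}) with covering maps $\alpha\colon U\to G$ and $\beta\colon W\to H$, and all roots $x\in V(U)$, $y\in V(W)$; throughout put $v=\alpha(x)$ and $w=\beta(y)$. Since a covering map is bijective on each neighborhood, $x$ and $v$ have the same degree, and so do $y$ and $w$. The base case $i=0$ is immediate: $U^0_x$ and $W^0_y$ are one-vertex rooted trees and $C^0$ is the uniform coloring, so both sides of the equivalence always hold. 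For $i=1$, the rooted tree $U^1_x$ is a star with $|N(v)|$ leaves, and, as $C^0$ is uniform, the multiset $\msetdef{C^0(v')}{v'\in N(v)}$ carries exactly the information $|N(v)|$; hence $U^1_x\cong W^1_y\iff|N(v)|=|N(w)|\iff C^1(v)=C^1(w)$.

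Fix $r\ge1$ and assume the claim for $i=r$. For the implication $U^{r+1}_x\cong W^{r+1}_y\Rightarrow C^{r+1}(v)=C^{r+1}(w)$, I would first use that, $U$ being a tree, the ball $U^{r+1}_x$ is isometrically embedded in $U$ (the path joining two vertices of the ball never leaves the ball), and likewise $W^{r+1}_y$ in $W$; hence any isomorphism $\psi\colon U^{r+1}_x\to W^{r+1}_y$ preserves distances computed in the ambient trees. As $\psi$ takes the root $x$ to the root $y$, it restricts to an isomorphism $U^r_x\to W^r_y$, so the induction hypothesis gives $C^r(v)=C^r(w)$. Also $\psi$ maps $N(x)$ bijectively onto $N(y)$, and for each $x_i\in N(x)$ the triangle inequality gives $U^r_{x_i}\subseteq U^{r+1}_x$, so $\psi$ restricts to an isomorphism $U^r_{x_i}\to W^r_{\psi(x_i)}$; by the induction hypothesis, $C^r(\alpha(x_i))=C^r(\beta(\psi(x_i)))$. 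Since $\alpha$ maps $N(x)$ bijectively onto $N(v)$ and $\beta\circ\psi$ maps $N(x)$ bijectively onto $N(w)$, it follows that $\msetdef{C^r(v')}{v'\in N(v)}=\msetdef{C^r(w')}{w'\in N(w)}$, and with $C^r(v)=C^r(w)$ the recursion \refeq{Ci} yields $C^{r+1}(v)=C^{r+1}(w)$.

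For the converse, assume $C^{r+1}(v)=C^{r+1}(w)$. By \refeq{Ci} this says $C^r(v)=C^r(w)$ together with $\msetdef{C^r(v')}{v'\in N(v)}=\msetdef{C^r(w')}{w'\in N(w)}$. The first equality and the induction hypothesis give $U^r_x\cong W^r_y$, hence also $U^{r-1}_x\cong W^{r-1}_y$. The multiset equality forces $|N(v)|=|N(w)|$, so $x$ and $y$ share a common degree $k$; and since $\alpha$, $\beta$ biject $N(x)$ onto $N(v)$ and $N(y)$ onto $N(w)$, we may enumerate $N(x)=\{x_1,\dots,x_k\}$ and $N(y)=\{y_1,\dots,y_k\}$ so that $C^r(\alpha(x_i))=C^r(\beta(y_i))$ for all $i$, whence $U^r_{x_i}\cong W^r_{y_i}$ for all $i$ by the induction hypothesis. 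At this point the hypotheses of Lemma~\ref{lem:trees} (taken with $T=U$, $S=W$ and the parameter $r$) are exactly in force, and it delivers $U^{r+1}_x\cong W^{r+1}_y$, closing the induction.

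The crux is this last implication, and it is where Lemma~\ref{lem:trees} is essential: one cannot simply merge the isomorphisms $U^r_{x_i}\cong W^r_{y_i}$ at their roots to build an isomorphism of $U^{r+1}_x$ and $W^{r+1}_y$, because such an isomorphism may fail to carry the branch of $U^r_{x_i}$ directed toward $x$ onto the branch of $W^r_{y_i}$ directed toward $y$, so the naive gluing is not even well defined. Lemma~\ref{lem:trees} overcomes precisely this, using the auxiliary datum $U^{r-1}_x\cong W^{r-1}_y$. The remaining ingredients---the isometry of balls in trees and the degree/neighborhood behaviour of covering maps---are routine.
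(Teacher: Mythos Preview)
Your proof is correct and follows essentially the same approach as the paper: induction on $i$, with Lemma~\ref{lem:trees} carrying the nontrivial implication from $C^{r+1}(\alpha(x))=C^{r+1}(\beta(y))$ to $U^{r+1}_x\cong W^{r+1}_y$. The only difference is expository: the paper states the inductive step as a single biconditional, $U^{i+1}_x\cong W^{i+1}_y$ iff $U^i_x\cong W^i_y$ and $\msetdef{U^i_z}{z\in N(x)}=\msetdef{W^i_z}{z\in N(y)}$, attributing both directions to Lemma~\ref{lem:trees}, whereas you spell out the forward (easy) direction via restriction of the isomorphism to sub-balls --- which is in fact what that direction requires, since Lemma~\ref{lem:trees} as stated only supplies the converse.
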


\begin{proof}
  We use induction on $i$. 
The base case of $i=0$ is trivial.

Assume that the claim is true for $i$. By Lemma \ref{lem:trees},
$U^{i+1}_x\cong W^{i+1}_y$ if and only if $U^i_x\cong W^i_y$
and $\msetdef{U^i_z}{z\in N(x)}=\msetdef{W^i_z}{z\in N(y)}$,
where the multisets consist of the isomorphism types of rooted trees.
By the induction assumption, the former condition is equivalent to
$C^i(\alpha(x))=C^i(\beta(y))$ and the latter condition is equivalent to
$\msetdef{C^i(\alpha(z))}{z\in N(x)}=\msetdef{C^i(\beta(z))}{z\in N(y)}$.
Since $\alpha$ and $\beta$ are bijective on $N(x)$ and $N(y)$,
the last equality can be rewritten as
$\msetdef{C^i(a)}{a\in N(\alpha(x))}=\msetdef{C^i(a)}{a\in N(\beta(y))}$.
By the definition of $C^{i+1}(u)$, we conclude that
$U^{i+1}_x\cong W^{i+1}_y$ if and only if $C^{i+1}(\alpha(x))=C^{i+1}(\beta(y))$.
\end{proof}

The next lemma shows an upper bound for the truncation depth
sufficient to detect isomorphism of universal covers.

\begin{lemma}\label{lem:twon}
Suppose that $G$ and $H$ are connected graphs with at most
$n$ vertices each. Let $U$ and $W$ be their universal covers. 
Then $U_x\cong W_y$ if and only if $U_x^{2n-1}\cong W_y^{2n-1}$.  
\end{lemma}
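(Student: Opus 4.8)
The plan is to restate the claim in the language of color refinement via Lemma~\ref{lem:UvsC} and then invoke the standard fact that color refinement on an $N$-vertex graph reaches a stable coloring within $N-1$ rounds. Concretely, I would run the color refinement procedure on the disjoint union $G\cup H$, so that the colors $C^i$ of vertices of $G$ and of $H$ are directly comparable; this is the reading of Lemma~\ref{lem:UvsC} we use, and it does not alter the colorings of $G$ and $H$ individually, since the color $C^i(v)$ depends only on the connected component of $v$ and $G$, $H$ occupy separate components of $G\cup H$. As $|V(G\cup H)|\le 2n$, and as the partition induced by $C^i$ is a proper refinement of that induced by $C^{i-1}$ until stabilization while a partition of a $2n$-element set has at most $2n$ classes, we get $C^{2n-1}=C^{2n}=\cdots$; that is, $\stabi{G\cup H}\le 2n-1$.

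Now fix covering maps $\alpha\colon U\to G$ and $\beta\colon W\to H$ and roots $x\in V(U)$, $y\in V(W)$. The implication $U_x\cong W_y\Rightarrow U_x^{2n-1}\cong W_y^{2n-1}$ is trivial, since an isomorphism of rooted trees restricts to an isomorphism of their depth-$(2n-1)$ truncations. For the converse, assume $U_x^{2n-1}\cong W_y^{2n-1}$. Lemma~\ref{lem:UvsC} then yields $C^{2n-1}(\alpha(x))=C^{2n-1}(\beta(y))$. Since $C^{2n-1}$ refines $C^i$ for every $i\le 2n-1$ and coincides with $C^i$ for every $i\ge 2n-1$, we obtain $C^i(\alpha(x))=C^i(\beta(y))$ for \emph{all} $i\ge0$, and hence, by Lemma~\ref{lem:UvsC} again, $U_x^i\cong W_y^i$ for every $i\ge0$. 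It remains to upgrade this to $U_x\cong W_y$.

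For this last step I would use a compactness (König's lemma) argument, which is the only non-routine ingredient. The trees $U=U_x(G)$ and $W=W_y(H)$ are locally finite: by Lemma~\ref{lem:UxG}.2 the map $\gamma_G$ is a covering map, so the degree of any vertex of $U_x(G)$ equals the degree of its image in the finite graph $G$, and similarly for $W$. Hence each truncation $U_x^i$, $W_y^i$ is a \emph{finite} rooted tree, so the set $I_i$ of root-preserving isomorphisms $U_x^i\to W_y^i$ is finite, and nonempty by the previous paragraph. A root-preserving isomorphism preserves distance from the root, so restriction defines maps $I_{i+1}\to I_i$, making $(I_i)_{i\ge0}$ an inverse system of nonempty finite sets; its inverse limit is therefore nonempty, and a compatible family $(\varphi_i)_{i\ge0}$ in it glues to a single bijection $V(U)\to V(W)$ which is an isomorphism of rooted trees (adjacency of any two vertices is decided at a finite depth). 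This gives $U_x\cong W_y$. The main obstacle is precisely this gluing step together with confirming local finiteness --- which is exactly the point where the finiteness of $G$ and $H$, and hence the dependence of the truncation depth on $n$, enters the argument.
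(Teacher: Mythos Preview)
Your proof is correct and follows essentially the same approach as the paper: translate to color refinement via Lemma~\ref{lem:UvsC}, use that the partition of the $2n$-vertex disjoint union $G\cup H$ stabilizes by round $2n-1$ to get $U_x^i\cong W_y^i$ for all $i$, and then apply a K\"onig-type compactness argument to glue the finite-depth isomorphisms. Your write-up is slightly more explicit about local finiteness and the inverse-limit formulation, but the ideas and structure match the paper's proof.
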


\begin{proof}
  One direction is trivial. Suppose that $U_x^{2n-1}\cong W_y^{2n-1}$.
Fix covering maps $\alpha$ of $U$ onto $G$ and $\beta$ of $W$ onto $H$.
By Lemma \ref{lem:UvsC}, $C^{2n-1}(\alpha(x))=C^{2n-1}(\beta(y))$.
Note that $C^{2n-1}$ is a stable partition of the disjoint union of $G$ and $H$.
It follows that $C^i(\alpha(x))=C^i(\beta(y))$ for all $i$.
Using Lemma \ref{lem:UvsC} again, we conclude that $U_x^i\cong W_y^i$ for all $i$.
Consider the set of all isomorphisms from $U_x^i$ to $W_y^i$, $i\ge0$.
Define an auxiliary graph on this set joining an isomorphism $\phi$ from 
$U_x^i$ to $W_y^i$ and an isomorphism $\psi$ from $U_x^{i+1}$ to $W_y^{i+1}$
by an edge if $\psi$ extends $\phi$. By König's lemma, this graph
contains an infinite path $\phi_0\subset\phi_1\subset\phi_2\subset\ldots$.
The union of these isomorphisms gives us an isomorphism from $U_x$ to~$W_y$.
\end{proof}

\subsection{Existence of a common cover: Angluin's algorithm}\label{ss:angluin}

\begin{lemma}\label{lem:common}
  Suppose that connected graphs $G$ and $H$ have at most $n$ vertices each.
Then the following conditions are equivalent:
\begin{enumerate}
\item[(1)]
$G$ and $H$ have a common covering graph.
\item[(2)]
\mbox{$\setdef{C^{2n-1}(u)}{u\in V(G)}\cap\setdef{C^{2n-1}(v)}{v\in V(H)}\ne\emptyset$.}
\item[(3)]
$\setdef{C^{2n-1}(u)}{u\in V(G)}=\setdef{C^{2n-1}(v)}{v\in V(H)}$.
\item[(4)]
$\setdef{C^{s+1}(u)}{u\in V(G)}=\setdef{C^{s+1}(v)}{v\in V(H)}$, where $s=\stabi G$.
\end{enumerate}
\end{lemma}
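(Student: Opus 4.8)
The plan is to prove the cycle of implications $(1)\Rightarrow(3)\Rightarrow(2)\Rightarrow(1)$ and, separately, $(1)\Rightarrow(4)\Rightarrow(2)$; together these give the equivalence of all four statements. Throughout I run color refinement on the disjoint union $G\cup H$, which has at most $2n$ vertices, so that $C^{2n-1}$ is already its stable coloring. I would also note once and for all that the partition of $V(G)$ induced by $C^i$ does not depend on whether color refinement is run on $G$ alone or on $G\cup H$, since a refinement step inspects only neighborhoods and these stay inside a connected component; in particular $s=\stabi G$ is exactly the first round after which the $V(G)$-part of the partition of $G\cup H$ stops changing.

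For $(1)\Rightarrow(3)$ and $(1)\Rightarrow(4)$, fix covering maps $\alpha$ of a universal cover $U$ of $G$ onto $G$ and $\beta$ of a universal cover $W$ of $H$ onto $H$. A common cover forces $U\cong W$ by Lemma~\ref{lem:UGUH}.2; fix an isomorphism $\iota\colon U\to W$. Given $u\in V(G)$, pick $x\in\alpha^{-1}(u)$; then $\iota$ restricts to isomorphisms $U^i_x\cong W^i_{\iota(x)}$ for every $i$, so, writing $v=\beta(\iota(x))\in V(H)$, Lemma~\ref{lem:UvsC} yields $C^i(u)=C^i(v)$ for every $i$. The symmetric argument with $\iota^{-1}$ then gives $\setdef{C^i(u)}{u\in V(G)}=\setdef{C^i(v)}{v\in V(H)}$ for all $i\ge0$; taking $i=2n-1$ gives (3) and $i=s+1$ gives (4). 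The implication $(3)\Rightarrow(2)$ is immediate. For $(2)\Rightarrow(1)$: from $C^{2n-1}(u_0)=C^{2n-1}(v_0)$ with $u_0\in V(G)$ and $v_0\in V(H)$, stability of $C^{2n-1}$ on $G\cup H$ gives $C^i(u_0)=C^i(v_0)$ for all $i$; taking preimages of $u_0,v_0$ under $\alpha,\beta$ and applying Lemma~\ref{lem:UvsC} together with the König's lemma argument from the proof of Lemma~\ref{lem:twon} produces an isomorphism $U\cong W$. Then $U$ covers $G$ and, composing with this isomorphism, also covers $H$, so $U$ is a common covering graph and (1) holds. (If a \emph{finite} common cover is wanted, one may additionally quote Leighton's theorem, but this is not needed for (1) as stated.)

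The step I expect to be the main obstacle is $(4)\Rightarrow(2)$: condition (4) is phrased at the possibly small round $\stabi G+1$, so the real content is that $H$ cannot keep refining appreciably beyond round $\stabi G$. Here is the intended argument. Since $C^{s+1}$ refines $C^s$ (so that the $C^{s+1}$-color of a vertex determines its $C^s$-color), condition (4) implies $\setdef{C^s(u)}{u\in V(G)}=\setdef{C^s(v)}{v\in V(H)}$. Let $p$ be the number of colors that $C^s$ uses on $V(G)$; because $C^s$ is stable on $G$, the coloring $C^{s+1}$ also uses exactly $p$ colors on $V(G)$, hence, by (4), exactly $p$ colors on $V(H)$ as well. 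Thus on $V(H)$ the partitions induced by $C^s$ and $C^{s+1}$ both have $p$ classes while one refines the other, so they coincide, and by the recursive definition of color refinement $H$ is already stable after $s$ rounds. Consequently $G\cup H$ is stable after $s\le n-1\le 2n-1$ rounds, so the partition induced by $C^{s+1}$ coincides with that induced by $C^{2n-1}$; the common $C^{s+1}$-color furnished by (4) for a fixed $u_0\in V(G)$ and some $v_0\in V(H)$ is therefore a common $C^{2n-1}$-color, establishing (2).
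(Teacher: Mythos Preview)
Your proof is correct in outline and uses the same ingredients as the paper (Lemmas~\ref{lem:UGUH}, \ref{lem:UvsC}, \ref{lem:twon}); you merely organize the implications as a cycle $(1)\Rightarrow(3)\Rightarrow(2)\Rightarrow(1)$ together with a branch $(1)\Rightarrow(4)\Rightarrow(2)$, whereas the paper proves the pairwise equivalences $(1)\!\Leftrightarrow\!(2)$, $(2)\!\Leftrightarrow\!(3)$, $(3)\!\Leftrightarrow\!(4)$.

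There is, however, a small logical slip in your $(4)\Rightarrow(2)$ argument. From ``$H$ is already stable after $s$ rounds'' you infer ``Consequently $G\cup H$ is stable after $s$ rounds.'' Stability on each connected component does \emph{not} in general imply stability on the disjoint union: take $G=K_3$ and $H=K_4$, each stable already at round~$0$, while $K_3\cup K_4$ refines at round~$1$. The correct justification---for which you already have all the pieces---is a global class count: you have shown that the $C^s$-color sets on $V(G)$ and $V(H)$ coincide and contain exactly $p$ colors, and likewise (by~(4)) for the $C^{s+1}$-color sets; hence on $V(G)\cup V(H)$ both $C^s$ and $C^{s+1}$ induce partitions with exactly $p$ classes, and since $C^{s+1}$ refines $C^s$, these partitions agree. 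This is precisely the reasoning the paper uses in its treatment of $(3)\!\Leftrightarrow\!(4)$. With this repair, your argument goes through.
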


Note that Condition 4 improves Condition 3 because $s+1\le n$.
While the equivalence to Conditions 2 and 3 means that
Condition 1 can be detected after stabilization of the coloring $C^i$
on the disjoint union of $G$ and $H$, the equivalence to Condition 4
means that this can actually be done as soon as 
$C^i$ stabilizes at least on \emph{one} of the graphs $G$ and~$H$.
Another consequence of Lemma \ref{lem:common} is this:
If $G$ and $H$ do not share a covering graph, then the color sets
$\setdef{C^{i}(u)}{u\in V(G)}$ and $\setdef{C^{i}(v)}{v\in V(H)}$
must be unequal starting from $i=n$ and disjoint starting from $i=2n-1$.

\begin{proof}
\textbf{$\mathbf{(1)\iff(2)}$.}
By Lemma \ref{lem:UGUH}, $G$ and $H$ have a common cover
if and only if their universal covers $U$ and $W$ are isomorphic.
Note that $U\cong W$ iff $U_x\cong W_y$ for some $x$ and $y$.
By Lemmas \ref{lem:twon} and \ref{lem:UvsC}, 
$$
U_x\cong W_y\iff U_x^{2n-1}\cong W_y^{2n-1}\iff
C^{2n-1}(\alpha(x))=C^{2n-1}(\beta(y)),
$$
for covering maps $\alpha$ from $U$ to $G$ and $\beta$ from $W$ to $H$.
It follows that $G$ and $H$ have a common cover
iff $C^{2n-1}(u)=C^{2n-1}(v)$ for some $u\in V(G)$ and $v\in V(H)$, which
is exactly Condition 2 in the lemma.

\smallskip

\textbf{$\mathbf{(2)\iff(3)}$.}
Condition 3 implies Condition 2 and follows by Lemma \ref{lem:UvsC} 
from the isomorphism $U\cong W$, which is implied by Condition~2.

\smallskip

\textbf{$\mathbf{(3)\iff(4)}$.}
Consider $C^{s}$, which is stable on $G$.
If the sets
$
\setdef{C^{s}(u)}{u\in V(G)}$ and
$\setdef{C^{s}(v)}{v\in V(H)}
$
are not equal,
then Conditions 3 and 4 are both false because unequal vertex colors cannot become equal later.
Suppose that these sets are equal
and compare the partitions $C^{s}$ and $C^{s+1}$ of $V(G)\cup V(H)$.
If $C^s\ne C^{s+1}$, this means that 
the number of colors in the $C^{s+1}$-coloring of $H$ is strictly larger
than in the $C^{s+1}$-coloring of $G$.
Therefore, Condition 4 is false, and Condition 3 is false too. 
If $C^s=C^{s+1}$, this means that the partition stabilizes on $V(G)\cup V(H)$.
In this case, vertices equally colored with respect to $C^{s}$
remain equally colored with respect to any $C^i$, which implies
that both Conditions 3 and 4 are true.
\end{proof}

Lemma \ref{lem:common} justifies Angluin's algorithm \cite{Angluin80} 
for deciding in polynomial time whether or not two given graphs
have a common covering graph.
Any of Conditions 2, 3, or 4 can be checked efficiently.
Note that we cannot compute the colors $C^{2n-1}(u)$ literally as they defined by Equation \refeq{Ci}
because the length of $C^i(u)$ grows exponentially with $i$.
We can overcome this complication by renaming the colors after each refinement step
(doing so, we will never need more than $2n$ color names).

\section{Norris's problem}\label{s:Norris}

In order to characterize expressibility of two-variable logic with counting quantifiers, 
Immerman and Lander \cite{ImmermanL90}
introduced the \emph{2-pebble counting game} on graphs $G$ and $H$.
We need a restricted version of this game 
that was defined and used for diverse purposes in \cite{GraedelO99,AtseriasM13}.
We call it the \emph{counting bisimulation game}.
The game is played by two players, Spoiler and Duplicator,
to whom we will refer as he and she respectively.
Each player has a pair of distinct pebbles $p$ and $q$, that
are put on vertices of $G$ and $H$ during the game.
The two copies of the same pebble are always in different graphs.
In each \emph{round} of the game the copies of one of the pebbles $p$ and $q$
change their positions while the copies of the other pebble do not move.
A round where the pebble $p$ moves is played as follows.
First, Spoiler removes the copies of $p$ from the board (if they are already put there).
Suppose that the copies of $q$ are located on vertices $u\in V(G)$ and $v\in V(H)$ at this point
of time.
Then Spoiler specifies a set of vertices $A$ in one of the graphs such that
either $A\subseteq N(u)$ or $A\subseteq N(v)$.
Duplicator has to respond with a set of vertices $B$
in the other graph such that $B\subseteq N(v)$ or $B\subseteq N(u)$ respectively.
Duplicator must keep the condition $|B|=|A|$ true; otherwise she loses the game in this round.
Finally, Spoiler puts $p$ on a vertex $b\in B$.
In response Duplicator has to place the other copy
of this pebble on a vertex $a\in A$.
In the next round the roles of $p$ or $q$ are interchanged.
We write $\game^r(G,u,H,v)$ to denote the $r$-round counting bisimulation game
on $G$ and $H$ starting from the position where 
the vertices $u\in V(G)$ and $v\in V(H)$ are occupied by one of the pebbles, say,
by the copies of~$q$. Duplicator wins this game if she does not lose any of the $r$ rounds.

\begin{lemma}\label{lem:game-color}
  Duplicator has a winning strategy in $\game^i(G,u,H,v)$
if and only if $C^i(u)=C^i(v)$.
\end{lemma}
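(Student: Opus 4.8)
The plan is to prove the equivalence by induction on $i$, mirroring the recursive definition \refeq{Ci} of the color refinement sequence. The base case $i=0$ is immediate: the coloring $C^0$ is uniform, so $C^0(u)=C^0(v)$ always holds, and $\game^0$ has no rounds, so Duplicator trivially wins. For the inductive step I would analyze a single round of $\game^{i+1}(G,u,H,v)$. Since the copies of $q$ sit on $u$ and $v$, in the first round Spoiler must move $p$, choosing a set $A$ inside $N(u)$ (or $N(v)$); Duplicator answers with $B$ of the same size in the other neighborhood; Spoiler picks $b\in B$; Duplicator picks $a\in A$; and then the remaining $i$ rounds are exactly a play of $\game^i(G,a,H,b)$ (up to swapping the roles of the two pebbles, which is symmetric). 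So Duplicator wins $\game^{i+1}(G,u,H,v)$ if and only if she can choose, against every Spoiler move, a response such that the resulting position $(a,b)$ admits a winning strategy for $\game^i$.

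The key observation, to be justified carefully, is that this round-by-round condition is precisely a Hall-type / back-and-forth matching condition between $N(u)$ and $N(v)$ with respect to the equivalence classes of $C^i$: Duplicator can survive the first round and reach a $C^i$-equivalent position exactly when, for each color class $\kappa$ in the range of $C^i$, the vertex $u$ has the same number of neighbors $a$ with $C^i(a)=\kappa$ as the vertex $v$ does. Indeed, if these multisets $\msetdef{C^i(a)}{a\in N(u)}$ and $\msetdef{C^i(b)}{b\in N(v)}$ are equal, Duplicator responds to any $A\subseteq N(u)$ by taking, color class by color class, an arbitrary subset $B\subseteq N(v)$ with $|B\cap C_\kappa^i|=|A\cap C_\kappa^i|$ for every $\kappa$; then whatever $b\in B$ Spoiler picks lies in some class $\kappa$, and Duplicator picks $a\in A$ in the same class, so $C^i(a)=C^i(b)$ and she wins the rest by the induction hypothesis. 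Conversely, if the multisets differ, there is a color $\kappa$ witnessed by, say, more neighbors of $u$ than of $v$; Spoiler takes $A$ to be all of those $\kappa$-colored neighbors of $u$, and Duplicator cannot find $B\subseteq N(v)$ of the same size all of whose elements are $C^i$-inequivalent-trap-free — more precisely, by a counting (pigeonhole) argument some $b\in B$ must have $C^i(b)\ne\kappa$, whereupon Spoiler plays that $b$ and by the induction hypothesis wins $\game^i(G,a,H,b)$ for every $a\in A$. Combining, Duplicator wins $\game^{i+1}(G,u,H,v)$ iff $C^i(u)=C^i(v)$ and $\msetdef{C^i(a)}{a\in N(u)}=\msetdef{C^i(b)}{b\in N(v)}$, which by \refeq{Ci} is exactly $C^{i+1}(u)=C^{i+1}(v)$.

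The main obstacle I anticipate is the bookkeeping around the asymmetry in Spoiler's choice (he may pick a subset of $N(u)$ or of $N(v)$) and the fact that $A$ need not be an entire color class; the argument must show Duplicator's color-class-wise matching strategy works for arbitrary subsets $A$, and that Spoiler's optimal counterexample can always be taken to be a full color class — this is where a clean pigeonhole statement is needed. A secondary subtlety is making the ``remaining $i$ rounds form $\game^i(G,a,H,b)$'' reduction fully rigorous, including the harmless interchange of which pebble ($p$ or $q$) carries the ``static'' pair into the next phase; since the game definition is symmetric in $p$ and $q$, this costs nothing but should be stated. Everything else is routine induction.
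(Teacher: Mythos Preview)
Your proposal is correct and follows essentially the same approach as the paper: induction on $i$, with the inductive step split into Duplicator's color-class-wise mirroring strategy when the $C^i$-multisets of neighborhoods agree, and Spoiler's choice of a full color class witnessing the discrepancy when they do not. The paper handles the case $C^i(u)\ne C^i(v)$ separately (Spoiler wins in $i$ moves by induction), whereas you fold it into the multiset comparison; both are fine since, for the uniform initial coloring, equality of the $C^i$-multisets of neighborhoods already forces $C^i(u)=C^i(v)$.
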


\begin{proof}
We use induction on $i$. For $i=0$ the claim is trivially true.
Assume that it is true for some $i\ge0$.

Suppose that
 $C^{i+1}(u)=C^{i+1}(v)$.
Then Duplicator wins $\game^{i+1}(G,u,H,v)$ as follows.
Without loss of generality, suppose that in the first round Spoiler specifies a set $A\subseteq N(u)$.
It follows from 
$C^{i+1}(u)=C^{i+1}(v)$
that 
\begin{equation}
  \label{eq:mseteq}
  \msetdef{C^i(a)}{a\in N(u)}=\msetdef{C^i(b)}{b\in N(v)}.
\end{equation}
This means that Duplicator can ``mirror'' the set $A$ with a set $B\subseteq N(v)$ such that
\refeq{mseteq} stays true if $N(u)$ is restricted to $A$ and
$N(v)$ is restricted to $B$. This allows Duplicator to ensure pebbling
vertices $a\in V(G)$ and $b\in V(H)$ such that $C^i(a)=C^i(b)$
and wins the remaining $i$ rounds by the induction assumption.

Suppose now that
  $C^{i+1}(u)\ne C^{i+1}(v)$.
Then Spoiler wins $\game^{i+1}(G,u,H,v)$ as follows.
The inequality of colors $C^{i+1}(u)$ and $C^{i+1}(v)$ means that either
$
C^{i}(u)\ne C^{i}(v)
$
or
$
\msetdef{C^i(a)}{a\in N(u)}\ne\msetdef{C^i(b)}{b\in N(v)}.
$
In the former case Spoiler wins in $i$ moves by the induction assumption.
In the latter case, there is a $C^i$-color $c$ such that for one of the two
vertices, say, for $u$, the set of neighbors $a$ with $C^i(a)=c$ is
strictly larger than the analogous set for $v$. Specifying the larger set
as $A$, Spoiler ensures pebbling $a$ and $b$ such that $C^i(a)\ne C^i(b)$
and wins in the next $i$ moves by the induction assumption.
\end{proof}

\begin{figure*}
  \centering
\begin{tikzpicture}[scale=.77]
\newcommand{\diamoond}[3]{
\path (#1,#2) node[bvertex] (b#3) {}
 (#1-.5,#2+1) node[cvertex] (c#3) {} edge (b#3)
 (#1+.5,#2+1) node[cvertex] (cc#3) {} edge (b#3) 
    (#1,#2+2) node[bvertex] (bb#3) {} edge (c#3) edge (cc#3);
}
\newcommand{\bloock}[3]{
\diamoond{#1-1.5}{#2+3}{1}
\diamoond{#1+1.5}{#2+3}{2}
\path (#1,#2-.25) node[vertex] (v#3) {}
    (#1,#2+.5) node[vvertex] (vnew) {} edge (v#3)
    (#1,#2+1.25) node[vertex] (vv) {} edge (vnew)
    (#1,#2+2) node[avertex] (a) {} edge (b1) edge (b2) edge (vv)
    (#1,#2+6) node[avertex] (a#3) {} edge (bb1) edge (bb2);
}
\begin{scope}[scale=.5,xshift=-70mm,yshift=38mm]
\path (0,0) node[vertex] (v1) {}
      (0,1) node[vvertex] (vv1) {} edge (v1)
      (0,2) node[vvvertex] (vvv) {} edge (vv1)
      (0,3) node[vvertex] (vv2) {} edge (vvv)
      (0,4) node[vertex] (v2) {} edge (vv2)
      (0,5) node[avertex] (a1) {} edge (v2)
   (-1.5,6) node[bvertex] (b11) {} edge (a1)
    (1.5,6) node[bvertex] (b12) {} edge (a1)
     (-2,7) node[cvertex] (c1) {} edge (b11)
     (-1,7) node[cvertex] (c2) {} edge (b11) 
      (1,7) node[cvertex] (c3) {} edge (b12)
      (2,7) node[cvertex] (c4) {} edge (b12) 
   (-1.5,8) node[bvertex] (b21) {} edge (c1) edge (c2)
    (1.5,8) node[bvertex] (b22) {} edge (c3) edge (c4)
      (0,9) node[avertex] (a2) {} edge (b21) edge (b22);
\end{scope}

\begin{scope}[scale=.5,xshift=50mm]
\bloock001
\bloock072
\bloock0{14}3
\path (0,20.75) node[vertex] (v4) {}
      (0,21.5) node[vvertex] (vnew45) {} edge (v4)
      (0,22.25) node[vertex] (v5) {} edge (vnew45)
      (0,23) node[avertex] (a4) {} edge (v5)
   (-1.5,24.5) node[bvertex] (b4) {} edge (a4)
    (1.5,24.5) node[bvertex] (bb4) {} edge (a4)
    (0,24) node[cvertex] (c4) {} edge (b4) edge (bb4)
    (0,25) node[cvertex] (c5) {} edge (b4) edge (bb4);
\draw (v2) -- (a1) (v3) -- (a2) (v4) -- (a3);
\node[draw=none,fill=none,below] at (0,-.55) {$u$};
\node[draw=none,fill=none,right] at (.2,20.3) {$u'$};
\node[draw=none,fill=none,left] at (-1.7,19.15) {$u''_1$};
\node[draw=none,fill=none,right] at (1.7,19.15) {$u''_2$};
\draw[->,dashed] (-4,20) -- (-1,20);
\node[draw=none,fill=none,above left] at (-7.5,20) {\it level};
\node[draw=none,fill=none,below left] at (-4,20) {\small\it $l=t(s+5)-1$};
\end{scope}

\begin{scope}[scale=.5,xshift=150mm]
\bloock001
\bloock072
\diamoond{-1.5}{17}x
\diamoond{1.5}{17}y
\path (0,13.75) node[vertex] (v3) {} edge (a2) 
      (0,14.5) node[vvertex] (vnew3) {} edge (v3)
      (0,15.25) node[vertex] (vv3) {} edge (vnew3)
      (0,16) node[avertex] (aaa) {} edge (vv3) edge (bx) edge (by)
   (-1.5,20) node[avertex] (aaa1) {} edge (bbx)
    (1.5,20) node[avertex] (aaa2) {} edge (bby)
    (-.8,20.66) node[vertex] (vvv1) {} edge (aaa1)
     (0,21.1) node[vvertex] (vvvnew) {} edge (vvv1)
     (.8,20.66) node[vertex] (vvv2) {} edge (aaa2) edge (vvvnew)
    (-1.5,22.5) node[bvertex] (b4) {} edge (aaa1)
     (1.5,22.5) node[bvertex] (bb4) {} edge (aaa2)
     (0,22) node[cvertex] (c4) {} edge (b4) edge (bb4)
     (0,23) node[cvertex] (c5) {} edge (b4) edge (bb4);
\draw (v2) -- (a1);
\draw[dashed] (.5,13.5) -- (8,13.5);
\node[draw=none,fill=none,above] at (6,13.5) {\it the head block};
\node[draw=none,fill=none,below] at (6,13.5) {\it $t-1$ tail blocks};
\node[draw=none,fill=none,below] at (0,-.55) {$v$};
\node[draw=none,fill=none,left] at (-1.7,20.3) {$v'_1$};
\node[draw=none,fill=none,right] at (1.7,20.3) {$v'_2$};
\node[draw=none,fill=none,left] at (-1.7,22.5) {$v''_1$};
\node[draw=none,fill=none,right] at (1.7,22.5) {$v''_2$};
\end{scope}
\end{tikzpicture}  
  \caption{The tail block $B_s$ for $s=5$ and the graphs $G_{s,t}$ and $H_{s,t}$ for $s=3$, $t=3$.}
  \label{fig:GstHst}
\end{figure*}

\begin{theorem}\label{thm:nancy}
  For each $n$, there are $n$-vertex graphs $G$ and $H$
whose universal covers $U$ and $W$ contain
vertices $x$ and $y$ such that
$U^i_x\cong W^i_y$ for all $i\le2n-16\sqrt n$
while $U_x\not\cong W_y$.
\end{theorem}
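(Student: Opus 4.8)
The plan is to realize the loop-versus-cloned-arrow intuition of Boldi--Vigna in the undirected, simple, whole-neighborhood setting by building two long ``chains'' of rigid gadget blocks. The graphs $G = G_{s,t}$ and $H = H_{s,t}$ are assembled from $t$ copies of a tail block $B_s$ (a gadget with roughly $s$ vertices, depicted in Fig.~\ref{fig:GstHst}) together with a small head block; the only difference between $G$ and $H$ is at the very end of the chain, where $G$ closes off with a ``short'' cap while $H$ closes off with a branching cap that mimics, at the level of universal covers, the effect of appending one extra block. The distinguished vertices $u\in V(G)$ and $v\in V(H)$ are placed at the far (``bottom'') end, as far as possible from the caps. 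I would first set the parameters: each block contributes a fixed number of levels (in the figure, $s+5$ per block), so the total number of vertices is $n = \Theta(st)$, and the cap is reached from $u$ (resp.~$v$) only after traversing all $t$ blocks, i.e.\ at level $l = t(s+5) - 1 = (1-o(1))n$ when $s$ and $t$ are balanced, say $s \asymp t \asymp \sqrt n$. A short back-of-the-envelope count with this balancing gives the claimed $2n - 16\sqrt n$.

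The heart of the argument is the lower bound, and by Lemma~\ref{lem:game-color} together with Lemma~\ref{lem:UvsC} it suffices to exhibit a Duplicator strategy in $\game^{i}(G,u,H,v)$ surviving $i = 2n - 16\sqrt n$ rounds (while Spoiler eventually wins, so $U_x\not\cong W_y$; the latter is easy, since the caps genuinely differ and color refinement on $G\cup H$ separates $u$ from $v$ after enough rounds). The gadget blocks must be designed so that they are \emph{rigid}: internally, color refinement distinguishes all their vertices quickly, so that within a single block Spoiler gains no free pebble moves and the game essentially reduces to a pursuit along the chain. The degree-$3$ ``diamond'' subgadgets and the alternating vertex/vvertex/vvvertex paths in Fig.~\ref{fig:GstHst} are there precisely to pin down, via local degree and color information, which copy of which block within the chain a pebble currently sits in, so that Duplicator's only real freedom --- and only real danger --- is an offset along the chain direction. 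The key invariant I would maintain is that whenever the two pebbles are placed, the positions $(a,b)$ with $a\in V(G)$, $b\in V(H)$ agree in local type and differ only by a controlled ``shift'' of blocks along the chain, with the shift small enough that neither pebble has yet been forced to the cap; as long as a pebble stays $\Omega(\sqrt n)$ levels away from the cap, Duplicator can answer any Spoiler set $A\subseteq N(u)$ or $A\subseteq N(v)$ by the corresponding set under the shift, keeping $|B|=|A|$ because the blocks are locally isomorphic.

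The main obstacle --- and where the construction has to be delicate --- is the bookkeeping of this ``shift'' near the two ends of the chain simultaneously. When Spoiler drives one pebble toward the cap, the local neighborhood in $G$ and in $H$ starts to diverge (that is, after all, why Spoiler eventually wins), so Duplicator must ``spend'' shift budget to stay alive, and she has a finite reservoir of it determined by the block length $s$. Symmetrically, at the far end where $u$ and $v$ themselves sit, the chain terminates, so Duplicator cannot shift arbitrarily far in that direction either. Balancing these two constraints is exactly what forces the $s \asymp t$ choice and produces the $16\sqrt n$ error term: one needs $t$ blocks to make the chain long (contributing the factor $2$ via the two pebbles leapfrogging), and one needs each block long enough ($s$ levels) to absorb the boundary effects at both caps without the shift running out. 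I would formalize this by an explicit potential function measuring the current offset plus the distance of each pebble from the nearer cap, show it changes by a bounded amount per round, and verify that it stays in the legal range for $2n - 16\sqrt n$ rounds; the only subtlety is handling the two rounds in which Spoiler ``swaps'' which pebble is active, where the offset seen by the moving pebble is inherited from the stationary one. The remaining verifications --- that the blocks are color-refinement-rigid, that $|V(G)| = |V(H)| = n$ exactly (padding by a few isolated-looking attachments if needed), and that the universal covers genuinely differ --- are routine given the gadget design.
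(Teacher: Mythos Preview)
Your overall architecture matches the paper's: build chain graphs $G_{s,t}$ and $H_{s,t}$ from $t$ blocks of size $\Theta(s)$ with differing caps, place $u,v$ at the far end, reduce via Lemmas~\ref{lem:UvsC} and~\ref{lem:game-color} to the counting bisimulation game, and balance $s\asymp t\asymp\sqrt n$ so that the overhead $2n-2l=O(s+t)$ becomes $O(\sqrt n)$. That skeleton is correct.

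However, your account of the \emph{mechanism} is off in two places, and the first would derail the construction if followed literally.

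\textbf{The blocks must not be rigid.} You write that the blocks ``must be designed so that they are rigid: internally, color refinement distinguishes all their vertices quickly.'' The paper needs the opposite. The crucial design property (Property~A in the paper) is that the partition of $V(G_{s,t})\cup V(H_{s,t})$ into a \emph{bounded} number of vertex types is \emph{equitable} away from $u,v$: every vertex of a given type has the same multiset of neighbor-types, regardless of which block or which level it sits at. This is what lets Duplicator survive the second half of the game: once the pebbles have reached the cap and their levels are no longer equal, Duplicator falls back to maintaining merely ``same type,'' and equitability guarantees she can always do so as long as neither pebble is at $u$ or $v$. If the blocks were rigid in your sense, vertices at different levels would acquire different refined colors, and Duplicator would lose immediately once the levels diverged. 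Your ``shift budget determined by the block length $s$'' does not exist; the type-matching works globally, not just within a window of width~$s$.

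\textbf{The factor $2$ is ``up and back,'' not leapfrogging.} In the bisimulation game each move is to a neighbor of the stationary pebble, so the game traces a walk. Duplicator's strategy is a clean two-phase invariant rather than a potential with offsets: in Phase~1 she keeps $\ell(a_i)=\ell(b_i)$ exactly (possible while this common level is below $l$, by Property~B), which costs Spoiler at least $l$ rounds just to reach the cap; in Phase~2 she keeps only ``same type'' (possible by Property~A while $\min\{\ell(a_i),\ell(b_i)\}>0$), and since both pebbles are near level $l$ when Phase~2 begins, Spoiler needs another $l$ rounds to drive one of them down to level $0$, the only place where equal types fail to guarantee equal degrees. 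That gives $2l$ rounds total. No offset bookkeeping, no pebble-swap subtlety, and no per-block budget is involved.
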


To prove this result, consider graphs $G_{s,t}$ and $H_{s,t}$
as shown in Fig.~\ref{fig:GstHst}.
Each of the graphs is a chain of $t$ blocks, one head block
and $t-1$ tail blocks. All tail blocks in both $G_{s,t}$ and $H_{s,t}$
are copies of the same graph $B_s$ with $s+10$ vertices
(where $s$ is the number of vertices of degree at most 2
spanning a path in $B_s$). 
The head blocks of $G_{s,t}$ and $H_{s,t}$ are different. 
Note that the head block of $G_{s,t}$ also contains a copy of $B_s$.
Both head blocks have $2s+15$ vertices. Thus, both $G_{s,t}$ and $H_{s,t}$
have 
$$
n=(t+1)(s+10)-5
$$ 
vertices.
Both graphs have a single vertex of degree 1; we 
denote these vertices by $u$ and $v$ respectively.

The graphs $G_{s,t}$ and $H_{s,t}$ are uncolored. However, we distinguish
$\lceil s/2\rceil+3$ types of vertices in them, that are presented in Fig.~\ref{fig:GstHst}
by auxiliary colors and shapes.

We will use the following properties of $G_{s,t}$ and $H_{s,t}$.
\begin{description}
\item[(A)] 
The partition of $V(G_{s,t})\cup V(H_{s,t})$ by types is almost equitable:
With the exception of $u$ and $v$, any two vertices of the same type have
the same number of neighbors of each type. For example, if $s=3$, all
possible neighborhoods are these:
$$
\raisebox{2.5mm}{
\begin{tikzpicture}[scale=.5]
\path (0,0) node[vertex] (a) {}
      (1,0) node[vvertex] (b)  {} edge (a)
      (2,0) node[vertex] (c)  {} edge (b);
\end{tikzpicture}}\,,\
\raisebox{2.5mm}{
\begin{tikzpicture}[scale=.5]
\path (0,0) node[avertex] (a) {}
      (1,0) node[vertex] (b)  {} edge (a)
      (2,0) node[vvertex] (c)  {} edge (b);
\end{tikzpicture}}\,,\
\raisebox{2.5mm}{
\begin{tikzpicture}[scale=.5]
\path (0,0) node[bvertex] (a) {}
      (1,0) node[cvertex] (b)  {} edge (a)
      (2,0) node[bvertex] (c)  {} edge (b);
\end{tikzpicture}}\,,\
\begin{tikzpicture}[scale=.4]
\path (-.7,-.7) node[bvertex] (a) {}
      (0,0) node[avertex] (b)  {} edge (a)
      (.7,-.7) node[bvertex] (c)  {} edge (b)
      (0,1) node[vertex] (d)  {} edge (b);
\end{tikzpicture}\,,\
\begin{tikzpicture}[scale=.4]
\path (-.7,-.7) node[cvertex] (a) {}
      (0,0) node[bvertex] (b)  {} edge (a)
      (.7,-.7) node[cvertex] (c)  {} edge (b)
      (0,1) node[avertex] (d)  {} edge (b);
\end{tikzpicture}\,.
$$
\item[(B)] 
Given a vertex $z$ in $G_{s,t}$ or $H_{s,t}$, we define its \emph{level} $\ell(z)$
as the distance from $z$ to $u$ or $v$ respectively.
Up to the level 
\begin{equation}
  \label{eq:ldef}
l=t(s+5)-1,  
\end{equation}
the following holds true: All vertices at the same level have the same type.
\end{description}

\begin{lemma}\label{lem:GstHst}
Consider the counting bisimulation $\game^r(G_{s,t},u,H_{s,t},v)$.
\begin{enumerate}
\item[\bf 1.] 
Spoiler has a winning strategy for $r=2l+1$;
\item[\bf 2.] 
Duplicator has a winning strategy for any $r\le2l$, where $l$ is defined by~\refeq{ldef}.
\end{enumerate}
\end{lemma}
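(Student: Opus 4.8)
The plan is to analyze the counting bisimulation game on $G_{s,t}$ and $H_{s,t}$ by exploiting the rigid level structure described in properties (A) and (B). First I would establish Part 1: since $u$ and $v$ are the only degree-$1$ vertices, a standard "pebble chase" strategy works. Spoiler keeps the moving pebble always one level deeper than the stationary pebble, marching from level $0$ (the vertices $u,v$) downward. As long as both pebbled vertices have the same type and the same level — which (B) guarantees up to level $l$ — Duplicator can only be beaten once Spoiler reaches a level where the two graphs genuinely diverge. The divergence happens exactly at the point where the head block of $H_{s,t}$ differs from $G_{s,t}$: at level $l$ or just below it, the vertex in one graph has a neighbor of a type (or a neighbor count of a given type) not matched in the other graph. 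Since each pebble move advances the frontier by one level, and Spoiler alternates between the two pebbles, it takes roughly $2l$ rounds to bring both pebbles down to the critical level and one more round — round $2l+1$ — to exploit the mismatch. I would make this precise by specifying that after $2j$ rounds both pebbles sit at level $j$ (with the moved pebble having just caught up), and verifying that at the critical level Spoiler can name an unmatchable neighbor set as in the proof of Lemma~\ref{lem:game-color}.

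For Part 2, the Duplicator side, the plan is to show she can survive $2l$ rounds by maintaining the invariant that, whenever the two pebbled vertices are at levels $\le l$, they have \emph{equal levels and equal types}, and more strongly that the induced configuration looks the same from both sides up to the relevant depth. Concretely I would invoke Lemma~\ref{lem:game-color}: it suffices to argue that $C^{2l}$ does not separate $u$ from $v$, i.e.\ $C^{2l}(u)=C^{2l}(v)$ in the disjoint union $G_{s,t}\cup H_{s,t}$. By Lemma~\ref{lem:UvsC} this is equivalent to $U_u^{2l}\cong U_v^{2l}$, isomorphism of the truncated universal covers. So the real content is: the universal covers of $G_{s,t}$ from $u$ and of $H_{s,t}$ from $v$ agree up to depth $2l$ but not beyond. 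Unfolding from the degree-$1$ vertex, both universal covers are trees that begin with the same long "spine" of blocks; the first place they can differ is when the unfolding reaches the head block, which sits at graph-distance about $l$ from $u,v$, but because the blocks contain diamonds and short detours, the \emph{tree} distance at which the difference first becomes visible is doubled to about $2l$. I would quantify this by tracking, block by block, how the non-backtracking walks through $B_s$ contribute to depth in the universal cover.

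**Main obstacle.**

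The hard part will be Part 2 — pinning down exactly why the discrepancy between the two head blocks only propagates to depth $2l$ in the universal cover rather than depth $l$. This requires understanding the internal geometry of the block $B_s$: a non-backtracking walk entering a block must traverse it, and the diamonds (the $4$-cycles built from the $b$- and $c$-type vertices) force a walk to "use up" extra depth, roughly doubling the effective length. I would argue this by giving an explicit isomorphism between $U_u^{2l}$ and $U_v^{2l}$, or equivalently by exhibiting a Duplicator strategy that, round by round, keeps the two pebbles on vertices whose truncated universal covers are isomorphic to the required remaining depth; the bookkeeping is that after $k$ rounds the relevant depth-budget has dropped by at most... (one per round), while the mismatch lives at depth $2l+1$. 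The cleanest route is probably to prove directly that the stable coloring $C^i$ assigns $u$ and $v$ the same color for all $i\le 2l$, using property (A) — that the type partition is equitable except at $u,v$ — to control how fast refinement can reach back to the degree-$1$ endpoints, combined with property (B) to know that along the initial segment of length $l$ nothing distinguishes the two graphs. I would then note that Part 1 forces $C^{2l+1}(u)\ne C^{2l+1}(v)$, and by Lemma~\ref{lem:UvsC} this gives $U_x\not\cong W_y$, completing what Theorem~\ref{thm:nancy} needs once $s$ and $t$ are chosen so that $2l = 2n - \Theta(\sqrt n)$.
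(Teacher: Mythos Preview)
Your proposal contains a genuine misunderstanding of the bisimulation game mechanics, and this breaks both parts.

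\textbf{Part 1.} You write that ``after $2j$ rounds both pebbles sit at level~$j$'' and that it therefore takes $2l$ rounds to reach the critical level, with one more round to win. This is not how the game works. In each round the moving pebble lands on a \emph{neighbor of the stationary pebble}, so the two pebbles are always adjacent afterwards and hence at consecutive levels. They leapfrog: after round~$j$ the just-moved pebble sits at level~$j$ and the other at level~$j-1$. So Spoiler reaches level~$l$ already after $l$ rounds, not $2l$. The count $2l+1$ arises from a completely different decomposition, which is what the paper does: $l$ rounds to climb to level~$l$ (so $a_l=u'$, $b_l=v'_j$); one round to exploit the structural asymmetry there (the set $\{u''_1,u''_2\}$ of \vtype{bvertex}-neighbors of $u'$ lies entirely at level~$l-1$, whereas one \vtype{bvertex}-neighbor of $v'_j$ is forced to level~$l+1$), producing a level mismatch of~$2$; and then $l$ further rounds marching back down a shortest path to the degree-$1$ vertex. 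Your one-round ``exploit the mismatch'' cannot win: after round~$l+1$ the pebbled vertices still have the same type and degree, so Duplicator has not lost --- Spoiler must cash in the level discrepancy at level~$0$, and that costs another $l$ rounds.

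\textbf{Part 2.} Reducing via Lemmas~\ref{lem:game-color} and~\ref{lem:UvsC} to $U_u^{2l}\cong U_v^{2l}$ is formally equivalent, but your explanation of where the factor~$2$ comes from is wrong. It is not that ``the diamonds force a walk to use up extra depth, roughly doubling the effective length'': a non-backtracking walk from $u$ reaches level~$l$ in exactly $l$ steps, since adjacent vertices differ in level by exactly~$1$. The doubling has a different source. By Property~(A) the type partition is equitable everywhere except at $u$ and $v$; hence Duplicator can always match types, and the \emph{only} way Spoiler ever wins is by forcing a position with exactly one of the pebbled vertices equal to $u$ or~$v$. The difference between the head blocks lives at level~$l$ and is invisible to types --- it is a difference in \emph{levels} of neighbors, not in their types. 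So Spoiler must first climb to level~$l$ to break the equal-levels invariant (this cannot happen before round~$l$), and once the invariant is broken both pebbles sit at level~$l$, whence at least $l$ further rounds are needed before either level can drop to~$0$. That ``there-and-back'' accounting is exactly the paper's argument: Duplicator maintains $\ell(a_i)=\ell(b_i)$ while this common value is below~$l$, and afterwards maintains equal types while $\min\{\ell(a_i),\ell(b_i)\}>0$; each phase lasts at least $l$ rounds. The diamonds are essential not for lengthening walks but for making Property~(A) hold, i.e.\ for ensuring no local degree anomaly exists anywhere except at the endpoints.
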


\begin{proof}
 {\bf 1.} 
Let $a_0=u$ and $b_0=v$. As long as $i\le l$, in the $i$-th round 
Spoiler specifies $A$ consisting of a single vertex $a_i$ such that
$a_i$ and $a_{i-1}$ are adjacent and $\ell(a_i)=\ell(a_{i-1})+1$.
Let $B=\{b_i\}$ be Duplicator's response. 
As long as $\ell(b_i)=\ell(b_{i-1})+1$, we have 
\begin{equation}
  \label{eq:ellabi}
\ell(a_i)=\ell(b_i)=i.  
\end{equation}
This equality can be broken only if 
$\ell(b_i)=\ell(b_{i-1})-1$. 
In this case $\ell(b_i)=i-2$ while $\ell(a_i)=i$,
and Spoiler wins in the other $i-1$ moves by pebbling vertices in $H_{s,t}$ along 
a shortest path from $b_i$ to $v$. He needs $i-2$ moves to reach $v$ and wins in 
one extra move because the corresponding vertex in $G_{s,t}$ has degree larger than~1.

We, therefore, assume that the condition \refeq{ellabi} holds true for all $i\le l$
In the notation of Fig.~\ref{fig:GstHst}, this means that $a_l=u'$ and $b_l=v'_j$ for $j=1$ or $j=2$.
In the $(l+1)$-th round Spoiler specifies the set $A=\{u''_1,u''_2\}$
consisting of the two \vtype{bvertex}-neighbors of $u'$ (both at the level $l-1$). If Duplicator's response $B$
contains the \vtype{vertex}-neighbor of $v'_j$, Spoiler pebbles it and wins in the next round
because the vertices pebbled in $G_{s,t}$ and $H_{s,t}$ have different degrees.
Otherwise, $B$ consists of the two \vtype{bvertex}-neighbors of $v'_j$. One of them,
$v''_j$, is at the level $l+1$. Spoiler pebbles it, while Duplicator has to pebble
$u''_k$ for $k=1$ or $k=2$. Using the fact that $\ell(u''_k)=l-1$ and $\ell(v''_j)=l+1$,
Spoiler wins in the next $l$ rounds by pebbling vertices along a shortest path from $u''_k$ to $u$.
At total, he makes $2l+1$ moves to win.

\smallskip

{\bf 2.} 
As before, let $a_i$ and $b_i$ denote the vertices of $G_{s,t}$ and $H_{s,t}$ pebbled in
the $i$-th round. Note that Spoiler can win in the next $(i+1)$-th round only if
$a_i$ and $b_i$ have different degrees. A sufficient condition for $a_i$ and $b_i$ having
equal degrees is that these vertices are of the same type
(excepting the case that one of them is $u$ or $v$). Property A of $G_{s,t}$ and $H_{s,t}$
implies that, if $a_{i-1}$ and $b_{i-1}$ have the same type,
Duplicator can ensure the same for $a_i$ and $b_i$ unless $a_{i-1}=u$ or $b_{i-1}=v$
(the case that both $a_{i-1}=u$ and $b_{i-1}=v$ is also favorable for Duplicator).
This observation is the basis of Duplicator's strategy. In the first phase of the game,
Duplicator keeps the levels of $a_i$ and $b_i$ equal. Up to level $l$,
by property B this implies also the equality of types.

More precisely, we fix a strategy for Duplicator such that
\begin{enumerate}
\item 
$\ell(a_i)=\ell(b_i)$ as long as 
\begin{equation}
  \label{eq:<l}
\ell(a_{i-1})=\ell(b_{i-1})<l;
\end{equation}
\item 
$a_i$ and $b_i$ have the same type as long as 
\begin{equation}
  \label{eq:>0}
\min\{\ell(a_{i-1}),\ell(b_{i-1})\}>0\text{ or }\ell(a_{i-1})=\ell(b_{i-1})=0.
\end{equation}
\end{enumerate}
Duplicator can keep Conditions 1 and 2 true due to Properties A and B of $G_{s,t}$ and $H_{s,t}$.
Note that \refeq{<l} implies \refeq{>0} and that fulfilling Condition 1 implies also
fulfilling Condition 2. Thus, Duplicator is alive as long as \refeq{>0} holds true.

Note now that \refeq{<l} can be broken for $i-1=l$ at the earliest (i.e., not earlier than in the $l$-th round) 
and, when this happens, we will have $\ell(a_{i-1})=\ell(b_{i-1})=l$.
Starting from this point, Spoiler needs no less than $l$ moves to break \refeq{>0}.
Therefore, Duplicator survives at least during the first $2l$ rounds
irrespectively of Spoiler's strategy.
\end{proof}

Turning back to the proof of Theorem \ref{thm:nancy}, 
consider $G=G_{s,t}$ and $H=H_{s,t}$, where we set $s=2t+1$.
Thus, both $G$ and $H$ have $n=(t+1)(s+10)-5=2t^2+13t+6$ vertices.
By Lemma \ref{lem:GstHst}.2, Duplicator has a winning strategy in
the counting bisimulation $\game^r(G,u,H,v)$ for 
$$
r=2t(s+5)-2=4t^2+12t-2=2n-14t-14>2n-16\sqrt n.
$$
By Lemma \ref{lem:game-color}, $C^r(u)=C^r(v)$. Consider covering maps
$\alpha$ from $U$ to $G$ and $\beta$ from $W$ to $H$.
Take $x$ and $y$ such that $\alpha(x)=u$ and $\beta(y)=v$.
By Lemma \ref{lem:UvsC}, $U^r_x\cong W^r_y$.
On the other hand, combining Lemmas \ref{lem:UvsC} and \ref{lem:game-color}
with Lemma \ref{lem:GstHst}.1, we conclude that $U^{r+1}_x\not\cong W^{r+1}_y$,
which implies that $U_x\not\cong W_y$. This proves the theorem in the case
that $n=2t^2+13t+6$ for some~$t$. 

For any other $n$, fix $t$ such that $2t^2+13t+6<n<2(t+1)^2+13(t+1)+6$.
Now we consider $G$ obtained from $G_{2t+1,t}$ by adding new $k=n-(2t^2+13t+6)$
vertices and connecting each of them to $u$ by an edge. The graph $H$
is obtained from $H_{2t+1,t}$ in the same way. 
The new vertices do not affect the outcome of $\game^r(G,u,H,v)$
if $k>3$. If $k\le3$, Duplicator can resist at most one round longer.
Since $r+1=4t^2+12t-1$ is still larger than $2n-16\sqrt n$,
the proof is complete.

\begin{remark}\label{rem:orgraphs}\rm
 Theorem \ref{thm:nancy} is true also for oriented graphs.
To see this, let us orient the graphs $G$ and $H$ constructed in the proof.
Note that adjacent vertices in these graphs have different types.
Fix arbitrarily an order on the set of all vertex types.
For each pair of adjacent vertices $a$ and $b$, we draw an arrow from $a$
to $b$ if the type of $b$ has higher position with respect to this order
than the type of $a$, for example,
\begin{equation}
  \label{eq:orient}
  \begin{tikzpicture}[scale=.5]
\path (0,0) node[vvertex] (a) {}
      (1,0) node[vertex] (b)  {} edge[<-] (a);
\end{tikzpicture}\,,\quad
\begin{tikzpicture}[scale=.5]
\path (0,0) node[vertex] (a) {}
      (1,0) node[avertex] (b)  {} edge[<-] (a);
\end{tikzpicture}\,,\quad
\begin{tikzpicture}[scale=.5]
\path (0,0) node[avertex] (a) {}
      (1,0) node[bvertex] (b)  {} edge[<-] (a);
\end{tikzpicture}\,,\quad
\begin{tikzpicture}[scale=.5]
\path (0,0) node[bvertex] (a) {}
      (1,0) node[cvertex] (b)  {} edge[<-] (a);
\end{tikzpicture}\,.
\end{equation}
(assuming $s=3$ as in Fig.~\ref{fig:GstHst}).
Denote the corresponding oriented graphs by $G'$ and $H'$.
Let $U=U_u(G)$ and $W=U_v(H)$. The vertices of $U$ and $W$
inherit the types in a natural way. Orient $U$ and $W$
accordingly to \refeq{orient} and denote the resulting 
oriented graphs by $U'$ and $W'$. Note that $U'$ and $W'$
are universal covers of $G'$ and $H'$ respectively.
Furthermore, isomorphism of $U$ and $W$ truncated at depth $i$
implies isomorphism of $U'$ and $W'$ truncated at the same depth.
\end{remark}

\begin{remark}\label{rem:port-num}\rm
With an additional effort, an analog of Theorem \ref{thm:nancy}
can be obtained also for port-numbered graphs, which are
a popular model of distributed networks \cite{Angluin80,DereniowskiKP14,Hendrickx13,YamashitaK88}.
Formally, a port-numbered graph can be defined as a relational structure that is
an undirected graph $G$ of maximum degree $D$ where, for each edge $\{x,y\}$, 
each of the ordered pairs $(x,y)$
and $(y,x)$ satisfies exactly one of the binary relations $B_1,\ldots,B_D$.
For every vertex $x$ of degree $d$, each of the relations $B_1,\ldots,B_d$
must be satisfied on some $(x,y)$. Thus, all arrows $(x,y)$
emanating from $x$ satisfy pairwise different relations $B_i$ for $i\le d$.
The meaning of $B_i(x,y)$ is that the edge $\{x,y\}$ receives the number $i$
among the edges incident to $x$. Since the techniques used in the proof of Theorem \ref{thm:nancy}
easily generalize to arbitrary binary structures, we can proceed basically in the same way.

We first construct undirected graphs and then
endow them with a suitable port-numbering. Specifically,
we construct graphs $G_{s,t}$ and $H_{s,t}$ similarly
to the proof of Theorem \ref{thm:nancy} from the tail and the
head blocks depicted in Fig.~\ref{fig:port}.
The types of vertices shown in the picture are now
considered to be colors of vertices. Thus,
$G_{s,t}$ and $H_{s,t}$ are vertex-colored graphs.
We define $G=G_{2s,t}$ and $H=H_{2s,t}$; note that the first
parameter is now even.

\begin{figure*}
  \centering
\begin{tikzpicture}
\newcommand{\diamoond}[3]{
\path (#1,#2) node[bvertex] (bb#3) {}
    (#1,#2+2) node[bbvertex] (bt#3) {} edge (bb#3);
}
\newcommand{\diamoondrev}[3]{
\path (#1,#2) node[bbvertex] (bb#3) {}
    (#1,#2+2) node[bvertex] (bt#3) {} edge (bb#3);
}
\newcommand{\tail}[3]{
\path (#1,#2) node[vertex] (vb#3) {}
      (#1,#2+1.25) node[vvertex] (vvb#3) {} edge (vb#3)
      (#1,#2+2.5) node[vvertex] (vvt#3) {} edge (vvb#3)
      (#1,#2+3.75) node[vertex] (vt#3) {} edge (vvt#3);
}
\newcommand{\bloock}{
\tail001
\diamoond{-1.5}61
\diamoondrev{1.5}62
\path (0,5) node[avertex] (ab) {} edge (vt1) edge (bb1) edge (bb2);
}

\begin{scope}[scale=.5]
\bloock
\path (0,9) node[avertex] (at) {} edge (bt1) edge (bt2)
      (0,9.5) node[inner sep=0pt,fill=black] () {} edge (at)
      (0,-.5) node[inner sep=0pt,fill=black] () {} edge (vb1);
\node[draw=none,fill=none] at (-2,0) {(a)};
\end{scope}

\begin{scope}[scale=.5,xshift=90mm]
\bloock
\tail0{10.125}2
\path (0,9) node[avertex] (at) {} edge (bt1) edge (bt2) edge (vb2)
      (0,15) node[avertex] (atop) {} edge (vt2) 
      (-1,17) node[bvertex] (bbl) {} edge (atop)
      (1,17) node[bbvertex] (bbr) {} edge (atop) edge (bbl)
(0,-.5) node[inner sep=0pt,fill=black] () {} edge (vb1);
\node[draw=none,fill=none] at (-2,0) {(b)};
\end{scope}

\begin{scope}[scale=.5,xshift=190mm]
\bloock
\path (-3,10) node[avertex] (al) {} edge (bt1)
      (-1.8,10) node[vertex] (vl) {} edge (al)
      (-.6,10) node[vvertex] (vvl) {} edge (vl)
      (.6,10) node[vvertex] (vvr) {} edge (vvl)
      (1.8,10) node[vertex] (vr) {} edge (vvr)
      (3,10) node[avertex] (ar) {} edge (bt2)  edge (vr)
     (-1,13) node[bvertex] (bbl) {} edge (al)
      (1,13) node[bbvertex] (bbr) {} edge (ar) edge (bbl)
(0,-.5) node[inner sep=0pt,fill=black] () {} edge (vb1);
\node[draw=none,fill=none] at (-2,0) {(c)};
\end{scope}
\end{tikzpicture}  
  \caption{(a) The tail block $B_4$.
(b) The head block of $G_{4,t}$.
(c) The head block of~$H_{4,t}$.}
  \label{fig:port}
\end{figure*} 

The modified construction ensures
some special properties of the color partition that
will be used below. The first useful property is the same as
in the proof of Theorem \ref{thm:nancy}: The partition is almost
equitable (except for the bottom vertices $u$ and $v$).
This allows us to prove that 
\begin{equation}\label{eq:iso-r}
U_u^r(G)\cong U_v^r(H)\text{ for }r=2n-O(\sqrt n).  
\end{equation} 

On the other hand, $U_u(G)\not\cong U_v(H)$. In fact,
even a stronger statement is true. Let $G_0$ and $H_0$ denote
the uncolored versions of $G$ and $H$ respectively.
Like in the proof of Theorem \ref{thm:nancy}, we can prove that
\begin{equation}
  \label{eq:non-iso}
U_u(G_0)\not\cong U_v(H_0).  
\end{equation}
This actually follows from the
fact that the eccentricity of the vertex $u$ in $G_0$ is greater
than the eccentricity of $v$ in~$H_0$. 

Now, we convert $G$ and $H$ into port-numbered graphs as follows.
Fix an arbitrary order on the vertex colors, for example,
\begin{equation}
  \label{eq:color-order}
\begin{tikzpicture}
\node[vertex] at (0,0) {};
\end{tikzpicture}\,,\quad
\begin{tikzpicture}
\node[vvertex] at (0,0) {};
\end{tikzpicture}\,,\quad
\begin{tikzpicture}
\node[avertex] at (0,0) {};
\end{tikzpicture}\,,\quad
\begin{tikzpicture}
\node[bvertex] at (0,0) {};
\end{tikzpicture}\,,\quad
\begin{tikzpicture}
\node[bbvertex] at (0,0) {};
\end{tikzpicture}  
\end{equation}
(we assume $s=4$ as in Fig.~\ref{fig:port}).
Notice another useful property of the vertex coloring:
The neighbors of each vertex $x$ have pairwise different colors.
This allows us to unambiguously enumerate the edges incident to $x$
consistently with the order \refeq{color-order}. In our example, this results in
the following labeling:
$$
\raisebox{5mm}{
\begin{tikzpicture}[scale=.6]
\path (0,0) node[vertex] (a) {}
      (1,0) node[vvertex] (b)  {} edge (a)
      (2,0) node[vvertex] (c)  {} edge (b)
       (b) node[draw=none,fill=none,above left] {\scriptsize 1}
       (b) node[draw=none,fill=none,above right] {\scriptsize 2};
\end{tikzpicture}}\,,\
\raisebox{5mm}{
\begin{tikzpicture}[scale=.6]
\path (0,0) node[vvertex] (a) {}
      (1,0) node[vertex] (b)  {} edge (a)
      (2,0) node[avertex] (c)  {} edge (b)
       (b) node[draw=none,fill=none,above left] {\scriptsize 1}
       (b) node[draw=none,fill=none,above right] {\scriptsize 2};
\end{tikzpicture}}\,,\
\raisebox{5mm}{
\begin{tikzpicture}[scale=.6]
\path (0,0) node[avertex] (a) {}
      (1,0) node[bvertex] (b)  {} edge (a)
      (2,0) node[bbvertex] (c)  {} edge (b)
       (b) node[draw=none,fill=none,above left] {\scriptsize 1}
       (b) node[draw=none,fill=none,above right] {\scriptsize 2};
\end{tikzpicture}}\,,\
\raisebox{5mm}{
\begin{tikzpicture}[scale=.6]
\path (0,0) node[avertex] (a) {}
      (1,0) node[bbvertex] (b)  {} edge (a)
      (2,0) node[bvertex] (c)  {} edge (b)
       (b) node[draw=none,fill=none,above left] {\scriptsize 1}
       (b) node[draw=none,fill=none,above right] {\scriptsize 2};
\end{tikzpicture}}\,,\
\begin{tikzpicture}[scale=.7]
\path (-.7,-.7) node[bvertex] (a) {}
      (0,0) node[avertex] (b)  {} edge (a)
      (.7,-.7) node[bbvertex] (c)  {} edge (b)
      (0,1) node[vertex] (d)  {} edge (b)
       (.1,0) node[draw=none,fill=none,above left] {\scriptsize 1}
       (-.15,.1) node[draw=none,fill=none,below left] {\scriptsize 2}
       (.15,.1) node[draw=none,fill=none,below right] {\scriptsize 3};
\end{tikzpicture}
$$
(furthermore, the single edge incident to $u$ or to $v$ is labeled by $1$).
As a result, each edge $\{x,y\}$ gets port labels that depend solely on the
colors of $x$ and~$y$:
\begin{equation}
  \label{eq:port-label-edge}
\begin{tikzpicture}[scale=.85]
\path (0,0) node[vertex] (a) {}
      (1,0) node[vvertex] (b)  {} edge (a)
       (b) node[draw=none,fill=none,above left] {\scriptsize 1}
       (a) node[draw=none,fill=none,above right] {\scriptsize 1};
\end{tikzpicture}\,,\quad
\begin{tikzpicture}[scale=.85]
\path (0,0) node[vvertex] (a) {}
      (1,0) node[vvertex] (b)  {} edge (a)
       (b) node[draw=none,fill=none,above left] {\scriptsize 2}
       (a) node[draw=none,fill=none,above right] {\scriptsize 2};
\end{tikzpicture}\,,\quad
\begin{tikzpicture}[scale=.85]
\path (0,0) node[vertex] (a) {}
      (1,0) node[avertex] (b)  {} edge (a)
       (b) node[draw=none,fill=none,above left] {\scriptsize 1}
       (a) node[draw=none,fill=none,above right] {\scriptsize 2};
\end{tikzpicture}\,,\quad
\begin{tikzpicture}[scale=.85]
\path (0,0) node[avertex] (a) {}
      (1,0) node[bvertex] (b)  {} edge (a)
       (b) node[draw=none,fill=none,above left] {\scriptsize 1}
       (a) node[draw=none,fill=none,above right] {\scriptsize 2};
\end{tikzpicture}\,,\quad
\begin{tikzpicture}[scale=.85]
\path (0,0) node[avertex] (a) {}
      (1,0) node[bbvertex] (b)  {} edge (a)
       (b) node[draw=none,fill=none,above left] {\scriptsize 1}
       (a) node[draw=none,fill=none,above right] {\scriptsize 3};
\end{tikzpicture}\,,\quad
\begin{tikzpicture}[scale=.85]
\path (0,0) node[bvertex] (a) {}
      (1,0) node[bbvertex] (b)  {} edge (a)
       (b) node[draw=none,fill=none,above left] {\scriptsize 2}
       (a) node[draw=none,fill=none,above right] {\scriptsize 2};
\end{tikzpicture}\,.
\end{equation}
Denote the obtained port-numbered vertex-colored graphs by $G'$ and $H'$
and their uncolored versions by $G'_0$ and $H'_0$.

Endow the universal covers $U=U_u(G)$ and $W=U_v(H)$ with port-numbering
according to \refeq{port-label-edge}. Denote the resulting port-numbered trees
by $U'$ and $W'$ and their uncolored versions by $U'_0$ and $W'_0$.
Note that $U'$ and $W'$ are the universal covers of $G'$ and $H'$,
and $U'_0$ and $W'_0$ are the universal covers of $G'_0$ and $H'_0$ respectively.
Any isomorphism between $U$ and $W$ is an isomorphism between $U'$ and $W'$
and, hence, also between $U'_0$ and $W'_0$. The same holds true as well for
the truncations at any depth. By \refeq{iso-r}, this implies that 
$U_u^r(G'_0)\cong U_v^r(H'_0)$ for $r=2n-O(\sqrt n)$.
Finally, $U_u(G'_0)\not\cong U_v(H'_0)$. This readily follows from~\refeq{non-iso}.  
\end{remark}

Using Lemma \ref{lem:UvsC}, from Theorem \ref{thm:nancy} 
we derive the following fact.

\begin{corollary}
For each $n$, there are $n$-vertex graphs $G$ and $H$
with disjoint vertex sets such that $\stabi{G\cup H}=(2-o(1))n$.
\end{corollary}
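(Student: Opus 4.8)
The plan is to transfer the lower bound on the truncation depth from Theorem~\ref{thm:nancy} directly into a lower bound on $\stabi{G\cup H}$ via Lemma~\ref{lem:UvsC}. Take $G$ and $H$ to be exactly the $n$-vertex graphs produced by Theorem~\ref{thm:nancy}, with vertices $x$ of $U=U_x(G)$ and $y$ of $W=U_y(H)$ satisfying $U^i_x\cong W^i_y$ for all $i\le 2n-16\sqrt n$ but $U_x\not\cong W_y$. Fixing covering maps $\alpha$ from $U$ onto $G$ and $\beta$ from $W$ onto $H$ and setting $u=\alpha(x)\in V(G)$, $v=\beta(y)\in V(H)$, Lemma~\ref{lem:UvsC} (applied with respect to the coloring $C^i$ of the disjoint union $G\cup H$) tells us that $C^i(u)=C^i(v)$ for every $i\le 2n-16\sqrt n$.

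The key step is then to observe that $u$ and $v$ do \emph{not} receive the same stable color. Since $U_x\not\cong W_y$, there is some depth $i$ at which $U^i_x\not\cong W^i_y$; by Lemma~\ref{lem:UvsC} again, $C^i(u)\ne C^i(v)$ for that $i$, and hence $u$ and $v$ lie in different classes of the stable partition of $G\cup H$. Now I would argue by contradiction: if the coloring of $G\cup H$ stabilized at some round $s\le 2n-16\sqrt n$, then $C^s$ would already be stable, so $C^i(u)=C^s(u)$ and $C^i(v)=C^s(v)$ for all $i\ge s$; combined with $C^s(u)=C^s(v)$ (which holds because $s\le 2n-16\sqrt n$), this would force $C^i(u)=C^i(v)$ for \emph{all} $i$, contradicting the existence of the separating depth. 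Therefore $\stabi{G\cup H}> 2n-16\sqrt n$, which is $(2-o(1))n$. The matching upper bound $\stabi{G\cup H}<2n$ is immediate since $G\cup H$ has $2n$ vertices and the refinement sequence strictly refines until it stabilizes; actually for a graph on $N$ vertices one has $\stabi{\cdot}<N$, and $G\cup H$ here has $2n$ vertices, so $\stabi{G\cup H}<2n$. Combining the two bounds gives $\stabi{G\cup H}=(2-o(1))n$, and since the statement only asks for $(2-o(1))n$ I need not be more precise.

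There is essentially no obstacle here beyond bookkeeping: every ingredient is already in place. The one point that needs a little care is making sure Lemma~\ref{lem:UvsC} is being invoked with the colorings computed on the disjoint union $G\cup H$ rather than on $G$ and $H$ separately — but color refinement on a disjoint union restricted to a component agrees with color refinement on that component only up to renaming, and what matters for us is only the \emph{equality pattern} of colors across the union, which is exactly what Lemma~\ref{lem:UvsC} characterizes in terms of the universal covers. So the argument goes through verbatim. If one wants the sharper constant, note that the proof of Theorem~\ref{thm:nancy} in fact exhibits $u,v$ with $C^r(u)=C^r(v)$ for $r=2n-14t-14$ where $n=2t^2+13t+6$ (so $t=\Theta(\sqrt n)$), while $C^{r+1}(u)\ne C^{r+1}(v)$ and the colors are never equal thereafter; hence $\stabi{G\cup H}\ge r+1 = 2n-14t-13 = 2n-\Theta(\sqrt n)$, which is all that $(2-o(1))n$ requires.
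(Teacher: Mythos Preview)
Your proposal is correct and follows essentially the same approach as the paper: derive the corollary from Theorem~\ref{thm:nancy} via Lemma~\ref{lem:UvsC}, exactly as the paper indicates in its one-line justification. Your added detail (the contradiction argument that stabilization before round $2n-16\sqrt n$ would force $C^i(u)=C^i(v)$ for all $i$, together with the trivial upper bound $\stabi{G\cup H}<2n$) is precisely what the paper leaves implicit; the only cosmetic slip is writing ``$C^i(u)=C^s(u)$'' when you mean that $u$ and $v$ stay in the same partition class for all $i\ge s$, but your conclusion $C^i(u)=C^i(v)$ is correct.
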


\section{Quantifier depth in two-variable counting logic}\label{s:logic}

We consider first-order logic \fo{} for graphs with 
two binary relations for adjacency
and equality of vertices. Let \fo2 denote the fragment of
\fo{} consisting of formulas built from only two variables.
We will enrich the language by using expressions of the type
$\exists^mx\Psi(x)$ in order to say that there are at least $m$ vertices $x$ with property 
$\Psi(x)$. The \emph{counting quantifier} $\exists^m$ contributes 1 in the quantifier depth 
irrespectively of the value of $m$. The corresponding syntactic extension of \fo2
will be denoted by~\twoclogic.
The following fact follows directly from the definition \refeq{Ci} by induction on~$i$.

\begin{lemma}\label{lem:def-color}
  For any possible $C^i$-color $c$ there is a formula $\Phi(x)$ in \twoclogic
of quantifier depth $i$ such that, for every graph $G$ and its vertex $u$,
$C^i(u)=c$ if and only if $\Phi(x)$ is true on $G$ for $x=u$.\noproof
\end{lemma}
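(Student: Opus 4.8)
The plan is a direct induction on $i$, reading the required formula straight off the recursive definition \refeq{Ci} of the colorings $C^i$. For the base case $i=0$ there is a single possible $C^0$-color, so I would simply take $\Phi(x):=(x=x)$, which has quantifier depth $0$ and holds at every vertex of every graph.

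For the inductive step I would assume the statement for $i$ and fix an arbitrary possible $C^{i+1}$-color $c$. By \refeq{Ci} it has the form $c=\langle c_0,M\rangle$, where $c_0$ is a possible $C^i$-color and $M$ is a finite multiset of possible $C^i$-colors; let $d_1,\ldots,d_r$ be the distinct colors appearing in $M$, with multiplicities $m_1,\ldots,m_r$, and put $m=|M|=m_1+\ldots+m_r$. Letting $\Phi_{c_0}$ and $\Phi_{d_j}$ be the formulas supplied by the induction hypothesis, writing $E(x,y)$ for the adjacency relation, and writing $\Phi_{d_j}^{\,y}$ for the formula obtained from $\Phi_{d_j}(x)$ by interchanging the roles of $x$ and $y$ (so that its only free variable is $y$), I would set
\[
\Phi_c(x)\ :=\ \Phi_{c_0}(x)\ \wedge\ \bigwedge_{j=1}^{r}\exists^{m_j}y\bigl(E(x,y)\wedge\Phi_{d_j}^{\,y}\bigr)\ \wedge\ \neg\,\exists^{m+1}y\,E(x,y).
\]
The correctness check is routine: by the induction hypothesis the first conjunct expresses $C^i(u)=c_0$; the big conjunction expresses that $u$ has at least $m_j$ neighbours of $C^i$-color $d_j$ for each $j$; the last conjunct caps the degree of $u$ at $m$; and since $m=\sum_j m_j$, these conditions together are equivalent to $\{\!\{C^i(w):w\in N(u)\}\!\}=M$, which by \refeq{Ci} is exactly $C^{i+1}(u)=c$.

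The one thing I would be careful about is the bookkeeping that keeps $\Phi_c$ a \twoclogic-formula of quantifier depth exactly $i+1$: each subformula $\exists^{m_j}y(E(x,y)\wedge\Phi_{d_j}^{\,y})$ mentions only the two variables $x$ and $y$, because the quantifier inside $\Phi_{d_j}^{\,y}$ rebinds $x$; its quantifier depth is $i+1$; the conjunct $\neg\exists^{m+1}y\,E(x,y)$ has depth $1$; $\Phi_{c_0}(x)$ has depth $i$; and forming conjunctions and negations does not increase quantifier depth. So the main (and really the only) obstacle is this two-variable reuse, together with noticing that a degree-bounding conjunct is needed to upgrade ``at least $m_j$ neighbours of each color'' to ``exactly the multiset $M$''; beyond that the lemma is immediate from the definitions.
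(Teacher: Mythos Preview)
Your proposal is correct and is exactly the approach the paper has in mind: the paper does not spell out a proof but merely states that the lemma ``follows directly from the definition \refeq{Ci} by induction on $i$,'' which is precisely the induction you carry out. Your care about variable reuse and the degree-bounding conjunct fills in the routine details the paper omits.
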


We say that a formula $\Phi$ \emph{distinguishes} $G$ and $H$
if it is true on exactly one of the graphs.
Let $D(G,H)$ denote the minimum quantifier depth of such a formula in \twoclogic.
The 2-pebble counting game \cite{ImmermanL90} on graphs $G$ and $H$
differs from its bisimulation version described in the preceding section in that
the sets $A$ and $B$ in each round are constrained only by the equality $|B|=|A|$.
Instead, Duplicator must now ensure that the pebbling after each round
determines a partial isomorphism, that is, 
the two pebbled vertices in $G$ are adjacent/equal
exactly if the pebbled vertices in $H$ are adjacent/equal
(in the bisimulation version, they were automatically not equal and adjacent in both graphs).

\begin{lemma}[Immerman and Lander \cite{ImmermanL90}]\label{lem:IL}
  $D(G,H)\le r$ if and only if Spoiler has a winning strategy 
in the $r$-round counting game on $G$ and~$H$.\noproof
\end{lemma}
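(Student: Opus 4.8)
The plan is to prove, by induction on $r$, a statement about arbitrary positions of the game and then to specialise it. Fix once and for all an assignment of the variables $x,y$ to the pebbles $p,q$. For a position in which the same subset of $\{p,q\}$ is placed on a tuple $\bar a$ in $G$ and a tuple $\bar b$ in $H$ (the pebbles being matched up), I claim that Duplicator has a winning strategy in the $r$-round counting game continued from this position if and only if $\bar a$ and $\bar b$ satisfy exactly the same \twoclogic-formulas of quantifier depth at most $r$ whose free variables are precisely those among $x,y$ corresponding to the placed pebbles. Lemma~\ref{lem:IL} is then the contrapositive of the special case of the empty position: with no pebbles we are comparing \emph{sentences}, so $D(G,H)\le r$ means that $G$ and $H$ disagree on some sentence of quantifier depth $\le r$, which by the claim means Duplicator loses the $r$-round game from the empty position, and — the game being finite and of perfect information, hence determined — this is the same as Spoiler having a winning strategy. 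The argument runs parallel to the proof of Lemma~\ref{lem:game-color}, with the two-sided hierarchy of quantifier-depth-$r$ types playing the role that the colourings $C^i$ played there.

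Two routine preliminaries are needed. First, although \twoclogic has infinitely many formulas of each quantifier depth (since $\exists^m$ ranges over all $m$), on the two \emph{fixed finite} structures $G$ and $H$ a quantifier $\exists^m$ with $m>|V(G)|+|V(H)|$ is equivalent to a false formula; hence on these structures there are, for each depth, only finitely many inequivalent formulas, the notion of the quantifier-depth-$r$ type of a vertex (relative to a position) makes sense, and each such type $\tau$ is isolated by a single Hintikka formula $\theta_{\tau}$ of quantifier depth $\le r$, obtained as the conjunction of finitely many representative formulas; $\theta_\tau$ has as free variables the variable of the pebble being placed together with the variable of the pebble, if any, that stays. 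Second, every \twoclogic-formula of quantifier depth $\le r+1$ is logically equivalent to a Boolean combination of atomic formulas and of formulas of the form $\exists^m x\,\chi$ or $\exists^m y\,\chi$ with $\chi$ of quantifier depth $\le r$; this follows by pushing negations inward and extracting the outermost quantifier. It is here that two-variableness is used: re-quantifying $x$ in $\exists^m x\,\chi$ matches exactly Spoiler removing pebble $p$ and discarding its old position, while the possibly free $y$ matches the pebble $q$ that remains.

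For the base case $r=0$, a position is a partial isomorphism precisely when $\bar a$ and $\bar b$ agree on all quantifier-free formulas in the relevant free variables, which is the claim for $r=0$. For the inductive step consider the first round from a position $(\bar a;\bar b)$ that is a partial isomorphism (if it is not, Duplicator has already lost and an atomic formula distinguishes, so both sides hold). By the symmetry $(G,\bar a)\leftrightarrow(H,\bar b)$ together with closure of \twoclogic under negation, we may assume Spoiler moves $p$ and offers a set $A\subseteq V(G)$. Let $S$ be the set of quantifier-depth-$r$ types, in the variable $x$ with the other variable a parameter if $q$ is placed, realised in $A$. Duplicator answers with $B\subseteq V(H)$, $|B|=|A|$, and then for whatever $b\in B$ Spoiler pebbles she must find $a\in A$ with $(\bar a a)$ and $(\bar b b)$ of the same quantifier-depth-$r$ type — which, by the induction hypothesis applied to the remaining $r$-round subgame, is exactly what she needs. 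So she survives and continues iff she can place $B$ inside $\{b\in V(H):\text{type}(b)\in S\}$, i.e.\ iff that set has at least $|A|$ elements. Ranging over all $A$, over the symmetric moves in $H$, and over the choice of pebble (which merely swaps $x$ and $y$), Duplicator wins the whole $(r+1)$-round game iff, for every quantifier-depth-$r$ type in $x$ and every one in $y$, the number of realisations in $G$ equals the number in $H$. By the normal-form remark this is precisely the statement that $\bar a$ and $\bar b$ agree on every \twoclogic-formula of quantifier depth $\le r+1$: agreement on the atomic part is the condition that the position is a partial isomorphism, and agreement on each $\exists^m x\,\chi$ (for all $m$ and all $\chi$ of depth $\le r$) is equivalent to the two type-count functions in $x$ coinciding, because $\{a:G\models\chi[\bar a,a]\}$ is a union of quantifier-depth-$r$ type classes.

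The two implications can also be read off directly: if some type $\tau$ has strictly more realisations in $G$ than in $H$, Spoiler wins by offering $A$ equal to the set of its realisations, forcing Duplicator to pebble some $b$ of a different depth-$r$ type, after which $\exists^m x\,\theta_{\tau}$ with $m=1+\#_H(\tau)$ distinguishes the position and has quantifier depth $\le r+1$; conversely, a distinguishing formula of quantifier depth $\le r+1$, put in normal form, either contains a disagreeing atom (so the position is not a partial isomorphism) or a disagreeing subformula $\exists^m x\,\chi$, and then some quantifier-depth-$r$ type occurring in $\chi$ has unequal realisation counts, giving Spoiler the move just described. The main obstacle is not conceptual — this is the classical Immerman--Lander correspondence — but lies in setting up the bookkeeping: the pebble-to-variable identification through a round, the finiteness of types and the resulting single-formula descriptions $\theta_{\tau}$ of bounded depth, and the normal-form decomposition that aligns re-quantification with re-pebbling. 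Once these are in place, the inductive step is the routine matching of ``a set of size $|A|$'' with ``the counting quantifier $\exists^{|A|}$''.
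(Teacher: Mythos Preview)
The paper does not prove this lemma; it is stated without proof and attributed to Immerman and Lander. Your argument is the standard Ehrenfeucht--Fra\"iss\'e correspondence for counting logic (induction on depth, with depth-$r$ Hintikka types playing the role that the colourings $C^i$ play in Lemma~\ref{lem:game-color}) and is correct.
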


\begin{lemma}\label{lem:indistC}\mbox{}
Suppose that at least one of graphs $G$ and $H$ is connected.
Let $s=\stabi G$. Then $G$ and $H$ are indistinguishable in \twoclogic
if and only if these graphs have equal number of vertices and
\begin{equation}
  \label{eq:CsGH}
\setdef{C^{s+1}(u)}{u\in V(G)}=\setdef{C^{s+1}(v)}{v\in V(H)}. 
\end{equation}
\end{lemma}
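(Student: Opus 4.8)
The plan is to prove both directions using the game characterization of \twoclogic-equivalence (Lemma~\ref{lem:IL}) together with the characterization of the bisimulation game via color refinement (Lemma~\ref{lem:game-color}) and the stabilization properties established in Lemma~\ref{lem:common}. Throughout I regard $C^i$ as a coloring of the disjoint union $G\cup H$, and I freely use that unequal colors stay unequal and that once the coloring stabilizes on $G$ (after $s$ rounds) the multiset $\msetdef{C^s(u)}{u\in V(G)}$ no longer changes.

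For the ``if'' direction, assume $|V(G)|=|V(H)|$ and \refeq{CsGH}. First I would argue that the coloring $C^i$ of $V(G)\cup V(H)$ is already stable at $i=s+1$: since $C^s$ is stable on $G$, the set of $G$-colors does not change, and \refeq{CsGH} forces $C^{s+1}$ to induce the same partition with the same color classes on $H$ as $C^s$ does on $G$, so no $H$-color can split further either (this is essentially the $C^s=C^{s+1}$ case in the proof of Lemma~\ref{lem:common}(3)$\iff$(4)). Hence $\setdef{C^i(u)}{u\in V(G)}=\setdef{C^i(v)}{v\in V(H)}$ for all $i$. Now I give Duplicator a winning strategy in the (unrestricted) $r$-round counting game for every $r$: she maintains the invariant that after each round the two pebbled vertices $a\in V(G)$, $b\in V(H)$ satisfy $C^j(a)=C^j(b)$ for all $j$ (equivalently, lie in the same stable class), and that the pair $p,q$ is a partial isomorphism. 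The stable class equality plus equitability gives, for each color $c$, the same number of $c$-colored neighbors of $a$ and of $b$, so Duplicator can mirror any set $A$ Spoiler picks by a set $B$ of equal size with a color-class-preserving bijection; this handles the ``counting'' requirement. The partial-isomorphism requirement (adjacency/equality) is where I must be careful: I need that two vertices in the same stable class of $G\cup H$ have the same adjacency pattern to any fixed vertex, and the same for equality — for equality this is immediate since a vertex is only equal to itself and singleton moves are controlled by putting the moving pebble back where the other already sits; for adjacency I note that if $a$ and the non-moving pebble $a'$ in $G$ are adjacent, then $a'$ is one of the neighbors counted in $a$'s color, and Duplicator's mirroring bijection sends it to a neighbor $b'$ of $b$ in the same stable class as $a'$, which by the invariant is exactly the non-moving pebble $b'$ in $H$; conversely for non-adjacency. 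So the invariant is preserved, Duplicator never loses, and by Lemma~\ref{lem:IL} no formula of any quantifier depth distinguishes $G$ and $H$.

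For the ``only if'' direction, I argue contrapositively: suppose $G$ and $H$ are \textit{not} distinguished in \twoclogic; I want $|V(G)|=|V(H)|$ and \refeq{CsGH}. If the color sets at level $s+1$ differed, then either some color appears in $G$ but not $H$ (or vice versa), or both sets of colors coincide but with different multiplicities somewhere — in the first case Spoiler names, in one graph, the full neighborhood realizing the missing color and Duplicator cannot match the color multiset, giving Spoiler a win in $s+2$ rounds and hence a distinguishing formula of depth $s+2$ by Lemma~\ref{lem:def-color}/Lemma~\ref{lem:IL}; in the second case Spoiler exploits the multiplicity gap directly with a counting quantifier at depth $s+1$. (Here I use connectedness of $G$ — or of $H$ — only to know that $C^s$ stabilizes, so that $s$ is well defined as $\stabi G$ and that $C^{s+1}$ already carries all the information the unbounded coloring does.) For the cardinality statement: if $|V(G)|\neq|V(H)|$, the sentence $\exists^m x\,(x=x)$ of quantifier depth $1$ (with $m=|V(G)|$, say) distinguishes them. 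So non-distinguishability forces both conditions. I expect the main obstacle to be the bookkeeping in the ``if'' direction showing that Duplicator's color-class mirror is simultaneously a valid counting response \textit{and} a partial isomorphism — in particular verifying that the non-moving pebble in $H$ is forced to be the image of the non-moving pebble in $G$ under the mirror bijection on the relevant neighborhood, which relies on both pebbles lying in the same stable class and on stability ruling out any further refinement that could separate them.
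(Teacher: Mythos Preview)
Your proposal has a genuine gap in the ``if'' direction. The condition \refeq{CsGH} is an equality of \emph{sets}, not multisets. Your Duplicator strategy implicitly assumes that for each stable color $c$ the two graphs contain the same \emph{number} of vertices of color $c$; without this, Spoiler can in the very first round pick $A$ to be the entire color class of some $c$ in the graph where it is larger, and Duplicator cannot respond with a set $B$ of the same size admitting a color-preserving bijection. Passing from set equality to multiset equality is exactly where the paper uses the connectedness hypothesis: from \refeq{CsGH} one first deduces that $G$ and $H$ share the same degree refinement matrix $M$; then, viewing $M$ as the adjacency matrix of a directed multigraph on the color set, connectedness of one of $G,H$ forces this auxiliary graph to be connected, and a double-counting argument along its edges shows that a strict inequality of class sizes at one color propagates to all colors, contradicting $|V(G)|=|V(H)|$. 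You skip this entirely, and your stated role for connectedness (``only to know that $C^s$ stabilizes'') is incorrect --- $\stabi G$ is well defined for any finite graph.

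A second, related issue: your Duplicator argument treats Spoiler's set $A$ as if it were contained in the neighborhood of the non-moving pebble. That is the rule in the \emph{bisimulation} game, but in the full Immerman--Lander game $A$ may be arbitrary. The paper's strategy therefore has Duplicator match, for each color $c$, both the number of $c$-colored \emph{neighbors} and the number of $c$-colored \emph{non-neighbors} of the fixed pebble that lie in $A$; the non-neighbor count matches only once multiset equality is established. Finally, in your ``only if'' direction, the case ``both sets of colors coincide but with different multiplicities'' cannot arise: if the sets coincide, \refeq{CsGH} already holds. The argument there reduces to the single observation (which you do give) that a color present in one graph but not the other yields a distinguishing formula via Lemma~\ref{lem:def-color}.
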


Note that Equality \refeq{CsGH} does not require that each color
occurs in $G$ and $H$ the same number of times.

\begin{proof}
If $G$ and $H$ have different number of vertices, these graphs
are distinguished by a sentence of quantifier depth 1 in \twoclogic.
If Equality \refeq{CsGH} is not true, $G$ and $H$ can be distinguished 
by a \twoclogic-sentence of quantifier depth $s+2$ by Lemma \ref{lem:def-color}. 

Conversely, assume that $G$ and $H$ have equal number of vertices 
and Equality \refeq{CsGH} is true. Then
$\setdef{C^{s}(u)}{u\in V(G)}=\setdef{C^{s}(v)}{v\in V(H)}$
because differently colored vertices cannot get equal colors later.
Since $s=\stabi G$, for every $c,c'\in\setdef{C^{s}(u)}{u\in V(G)}$
any vertex $u\in V(G)$ with $C^{s}(u)=c$ has the same number $m_{c,c'}$ of
neighbors $w$ such that $C^{s}(w)=c'$. The matrix $M=(m_{c,c'})$
is called the \emph{degree refinement matrix} of $G$.
Note that the same property, with the same matrix $M$, holds true also for $H$
for else \refeq{CsGH} would be false 
(this implies, in particular, that $\stabi H\le s$).

Now, let us show that
\begin{equation}
  \label{eq:msets-equal}
\msetdef{C^{s}(u)}{u\in V(G)}=\msetdef{C^{s}(v)}{v\in V(H)},
\end{equation}
which means that $G$ and $H$ contain the same number of vertices
of each $C^s$-color $c$.
Regarded as an adjacency matrix, the matrix $M$ determines
directed multigraphs on the vertex sets 
$\setdef{C^{s}(u)}{u\in V(G)}$ and $\setdef{C^{s}(v)}{v\in V(H)}$.
Since at least one of $G$ and $H$ is connected, the directed
multigraphs determined by $M$ are connected too.
Assume for a while that there is a color $c$ such that
$|\setdef{u\in V(G)}{C^{s}(u)=c}|<|\setdef{v\in V(H)}{C^{s}(v)=c}|$.
Note that the inequality 
$
|\setdef{u\in V(G)}{C^{s}(u)=c'}|<|\setdef{v\in V(H)}{C^{s}(v)=c'}|
$
is true also for any adjacent color $c'$ (such that $m_{c,c'}\ne0$).
It follows by connectedness that $|V(G)|<|V(H)|$.
This contradiction proves~\refeq{msets-equal}.

Using \refeq{msets-equal}, we can show that Duplicator has a winning strategy
in the counting game on $G$ and $H$ for any number of rounds.
Assume that the pebble $q$ occupies the vertices $u$ in $G$ and $v$ in $H$
such that $C^s(u)=C^s(v)$; note that the condition \refeq{msets-equal}
allows Duplicator to ensure this in the first round. 
Assume that Spoiler now plays with the other pebble $p$
and specifies a set $A$. Duplicator is able to respond
with a set $B$ such that, for each $C^s$-color $c$, the number of
vertices in $B$ colored in $c$ is the same as the number of
such vertices in $A$. Moreover, based on the fact that $G$ and $H$
have the same degree refinement matrix $M$, Duplicator can ensure
that the same is true even if only the neighbors or only the non-neighbors 
of $u$ and $v$ are considered. This allows Duplicator not to lose this
round and also to ensure that the vertices under $p$ in $G$ and $H$
will have the same $C^s$-color. 
\end{proof}

\begin{lemma}\label{lem:DvsStab}
  If $G$ and $H$ are distinguishable in \twoclogic,
then $D(G,H)\le\stabi G+2$.
\end{lemma}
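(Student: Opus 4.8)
The plan is to combine the previous two lemmas. Lemma~\ref{lem:IL} says that $D(G,H)\le r$ is equivalent to Spoiler winning the $r$-round counting game, and Lemma~\ref{lem:indistC} characterizes \twoclogic-indistinguishability in terms of stabilization data. So the natural route is: assume $G$ and $H$ are distinguishable in \twoclogic, and show Spoiler wins the counting game within $\stabi G+2$ rounds. Set $s=\stabi G$.

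First I would dispose of the trivial obstruction: if $|V(G)|\ne|V(H)|$, Spoiler wins in one round (a counting quantifier $\exists^{m}$ with $m=|V(G)|$ or the analogous move in the game), so $D(G,H)\le 1\le s+2$ and we are done. Hence assume the graphs have the same number of vertices. Since they are distinguishable, Lemma~\ref{lem:indistC} tells us that Equality~\refeq{CsGH} fails, i.e.
$$
\setdef{C^{s+1}(u)}{u\in V(G)}\ne\setdef{C^{s+1}(v)}{v\in V(H)}.
$$
Now I want to turn this color-set inequality into a Spoiler win in $s+2$ rounds. By Lemma~\ref{lem:def-color}, for any $C^{s+1}$-color $c$ there is a \twoclogic-formula $\Phi_c(x)$ of quantifier depth $s+1$ defining it; since one of the two color sets contains a color $c$ absent from the other, the sentence $\exists x\,\Phi_c(x)$ has quantifier depth $s+2$ and distinguishes $G$ and $H$. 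That already gives $D(G,H)\le s+2$ via the definition of $D$, without even invoking the game.

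So in fact the argument need not go through the game at all: the case split is (i) unequal vertex counts, handled by a depth-$1$ sentence, and (ii) equal vertex counts, where distinguishability forces the failure of~\refeq{CsGH} by Lemma~\ref{lem:indistC}, and then Lemma~\ref{lem:def-color} supplies a distinguishing sentence of depth $s+2$. The only place requiring care — and the step I would treat as the main obstacle — is justifying that distinguishability plus equal vertex counts really does imply~\refeq{CsGH} fails; but this is exactly the contrapositive of the ``if'' direction of Lemma~\ref{lem:indistC}, whose hypothesis (that at least one of $G,H$ is connected) is available to us here. I would state this reliance explicitly. One should also double-check the edge case $s+1$: since $C^{s+1}=C^{s}$ on $G$, the colors $C^{s+1}$ and $C^{s}$ on $G$ coincide, which is harmless and in fact is what makes the bound $s+2$ (rather than something larger) correct. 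With these observations in place the proof is a short two-case argument citing Lemmas~\ref{lem:IL}, \ref{lem:indistC}, and~\ref{lem:def-color}.
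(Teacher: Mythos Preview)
Your argument is essentially the paper's, but there is one genuine gap. You write that the hypothesis of Lemma~\ref{lem:indistC} (that at least one of $G,H$ is connected) ``is available to us here''. It is not: Lemma~\ref{lem:DvsStab} is stated for arbitrary $G$ and $H$, with no connectivity assumption. Without connectivity you cannot invoke Lemma~\ref{lem:indistC}, and the step ``distinguishable with equal vertex counts $\Rightarrow$ \refeq{CsGH} fails'' is unjustified.

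The paper closes this gap by first observing that $D(G,H)=D(\barG,\barH)$ (replace adjacency by non-adjacency in any distinguishing sentence; this preserves quantifier depth) and that color refinement produces the same sequence of partitions on $G$ and on $\barG$, so $\stabi G=\stabi{\barG}$. Since at least one of $G$ and $\barG$ is connected, one may pass to complements if necessary and then apply Lemma~\ref{lem:indistC} exactly as you do. Once you insert this reduction, your case split (unequal vertex counts handled by a depth-$1$ sentence; otherwise \refeq{CsGH} fails and Lemma~\ref{lem:def-color} yields a depth-$(s+2)$ sentence) is precisely the paper's proof. Your mention of Lemma~\ref{lem:IL} is harmless but, as you yourself note, unnecessary.
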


\begin{proof}
  Note that $D(G,H)=D(\barG,\barH)$, where $\barG$ and $\barH$
denote the complements of $G$ and $H$ respectively.
Considering the complements if necessary, we therefore can
suppose that $G$ is connected. Thus, we are in the conditions of Lemma \ref{lem:indistC}.
Let $s=\stabi G$. Lemma \ref{lem:indistC} implies that $G$ and $H$ can be
distinguished by specifying a $C^{s+1}$-color $c$ occurring in only one of these graphs.
By Lemma \ref{lem:def-color}, the existence of a vertex colored in $s$
can be expressed by a statement of quantifier depth~$s+2$.
\end{proof}

\begin{theorem}\label{Phard}
  The problem of deciding if two given connected graphs have a common cover 
is \p-complete under \ac0-reductions.
\end{theorem}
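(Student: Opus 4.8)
The plan is to show both membership in \p{} and \p-hardness under \ac0-reductions, since ``\p-complete'' here means complete with respect to such reductions. For membership, I would invoke Lemma~\ref{lem:common}: two connected graphs $G$ and $H$ with at most $n$ vertices each have a common cover if and only if $\setdef{C^{2n-1}(u)}{u\in V(G)}=\setdef{C^{2n-1}(v)}{v\in V(H)}$. The colors $C^i$ can be computed by the color refinement procedure in polynomial time, renaming colors after each of the $2n-1$ rounds as described after Lemma~\ref{lem:common}, so testing this equality of color sets is a polynomial-time task. Hence the problem lies in \p.

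For \p-hardness I would exploit the tight connection between common covers and \twoclogic-equivalence. By Lemma~\ref{lem:UGUH}, connected graphs $G$ and $H$ share a cover exactly when their universal covers are isomorphic; by Lemmas~\ref{lem:UvsC} and~\ref{lem:game-color} this is captured by the counting bisimulation game, and via Lemma~\ref{lem:indistC} one gets a correspondence with \twoclogic-indistinguishability for connected graphs of equal order. The point made in the introduction is that deciding \twoclogic-equivalence is \p-hard (Grohe~\cite{Grohe99}), and the reduction to the common-cover problem is essentially the one implicit in \cite[Theorem~2.2]{RamanaSU94} (see also \cite{AtseriasM13}). Concretely, I would start from two graphs $G_0,H_0$ whose \twoclogic-equivalence is to be decided, reduce to the case that they are connected and have the same number of vertices (adding, say, a common gadget), and then argue that $G_0\equiv_{\twoclogic}H_0$ iff the constructed connected graphs have a common cover, invoking Lemma~\ref{lem:indistC} to match \twoclogic-indistinguishability with the stabilization-based characterization of Lemma~\ref{lem:common}. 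The entire transformation only needs to relabel and locally augment the input, so it is computable by an \ac0 circuit; I would also have to check that Grohe's hardness already holds under \ac0-reductions (or reduces to the needed form), which is the standard situation for such combinatorial constructions.

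The main obstacle I expect is bookkeeping rather than a conceptual gap: one must verify that the gadget used to force connectivity and equal vertex counts does not create a spurious common cover and does not destroy \twoclogic-inequivalence, i.e.\ that the reduction is faithful in both directions, and that this can be arranged within \ac0. A clean way to handle the forward direction is to note that a common cover of the augmented graphs restricts, on the relevant part, to a common cover of the original fragments, so it pushes \twoclogic-equivalence back to $G_0$ and $H_0$ via the game characterization; the backward direction is immediate once $G_0\equiv_{\twoclogic}H_0$ gives, through Lemma~\ref{lem:indistC}, agreement of the stable color sets and hence of $C^{2n-1}$-color sets on the union, which by Lemma~\ref{lem:common} yields a common cover. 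Since the authors explicitly say this theorem is stated ``for expository purposes,'' I would keep the argument short, citing \cite{RamanaSU94,AtseriasM13,Grohe99} for the hardness core and supplying only the verification that the common-cover characterization of Section~\ref{s:refinement} plugs into it.
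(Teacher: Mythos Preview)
Your plan is correct and matches the paper's proof essentially line for line: membership in \p{} via Angluin's color-refinement algorithm (i.e., Lemma~\ref{lem:common}), and \p-hardness by reducing Grohe's \twoclogic-equivalence problem to the common-cover problem using that connected graphs of equal order are \twoclogic-equivalent iff they share a cover (Lemmas~\ref{lem:common} and~\ref{lem:indistC}). The paper is slightly terser---it simply asserts that the \twoclogic-equivalence problem is equivalent to its restriction to connected graphs and handles the passage from Grohe's colored/directed setting to uncolored graphs in a footnote---whereas you spell out the gadget bookkeeping; but there is no substantive difference in approach.
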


\begin{proof}
  The problem is in \p due to Angluin's algorithm; see Section \ref{ss:angluin}.
The hardness for \p follows by reduction from the \twoclogic-equivalence problem,
which is to decide whether two given graphs $G$ and $H$ are distinguishable
or equivalent in \twoclogic. 
Note that this problem is equivalent to its restriction to connected graphs.
Lemmas \ref{lem:common} and~\ref{lem:indistC}
readily imply that connected $G$ and $H$ with the same number of vertices are \twoclogic-equivalent
if and only if they have a common covering graph. The
\p-completeness of the \twoclogic-equivalence problem is established by Grohe~\cite{Grohe99}.\footnote{%
The \p-completeness of the \twoclogic-equivalence is stated in \cite{Grohe99} for directed graphs with
noting that the proof works also for (undirected) vertex-colored graphs.
The \twoclogic-equivalence problem for colored graphs easily reduces to
the version of this problem for uncolored graphs. Given a colored graph $G$
with $n$ vertices, attach $n+i$ new vertices to each vertex of $G$ colored in the $i$-th color,
remove all colors, and denote the resulting graph by $G'$. Then
$G$ and $H$ are equivalent in \twoclogic if and only if $G'$ and $H'$ are equivalent in this logic.
This easily follows from the characterization of the \twoclogic-equivalence by the
bijection game suggested by Hella~\cite{Hella96}.}
\end{proof}

The following result shows that the \twoclogic-equivalence type of a graph with $n$ vertices
is definable with quantifier depth at most $n+1$ and that this bound is
asymptotically tight.
\hide{
The equivalence class of a graph $G$ in \twoclogic consists of all
graphs indistinguishable from $G$ in this logic.
There is a sentence $\Phi$ in \twoclogic that defines this class  
in the sense that $\Phi$ is true on $G$ (hence an all \twoclogic-equivalent graphs)
but false on every non-\twoclogic-equivalent graph. Let $D(G)$ denote the
minimum quantifier depth of such a sentence.
Define the function $D(n)$ as the maximum $D(G)$
over $n$-vertex graphs. Theorem \ref{thm:qdepth} readily implies that $D(n)=(1-o(1))n$.
}

\begin{theorem}\label{thm:qdepth}\mbox{}
  \begin{enumerate}
  \item[\bf 1.] 
If $n$-vertex graphs $G$ and $H$ are distinguishable in \twoclogic,
then $D(G,H)\le n+1$.
\item[\bf 2.]
For each $n$,
there are $n$-vertex graphs $G$ and $H$ distinguishable in \twoclogic such
that $D(G,H) > n-8\,\sqrt n$.
  \end{enumerate}
\end{theorem}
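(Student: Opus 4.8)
Part 1 is the easy direction and should follow by combining results already in hand. If $G$ and $H$ are distinguishable in \twoclogic, then by Lemma~\ref{lem:DvsStab} (applied to whichever of $G,H$ is connected, or to the complements) we have $D(G,H)\le \stabi G+2$. The obvious bound $\stabi G\le n-1$ for an $n$-vertex graph then gives $D(G,H)\le n+1$. The only point needing a word of care is the degenerate situation where color stabilization already happens after very few rounds; this only helps, so the bound holds in all cases. I would also note explicitly why $\stabi G\le n-1$: each refinement step that is not already stable strictly increases the number of color classes, and there are at most $n$ classes, while the initial coloring is uniform, so at most $n-1$ proper refinements can occur.

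Part 2 is where the real work is, and the plan is to reuse the graphs $G=G_{s,t}$ and $H=H_{s,t}$ from the proof of Theorem~\ref{thm:nancy}, with the same choice $s=2t+1$, so that $n=(t+1)(s+10)-5=2t^2+13t+6$. By Lemma~\ref{lem:GstHst}.2, Duplicator wins the \emph{counting bisimulation} game $\game^r(G,u,H,v)$ for $r=2l=2t(s+5)-2$. The key step is to upgrade this to the \emph{standard} 2-pebble counting game of Immerman and Lander: I claim Duplicator also wins the $r$-round standard game on $G$ and $H$ (not merely the bisimulation variant). The reason is that Duplicator's strategy in Lemma~\ref{lem:GstHst}.2 keeps the two pebbled vertices at equal level (below $l$) and hence of equal type, and in these graphs adjacent vertices have distinct types, so the pebbled pair in $G$ is adjacent exactly when the pebbled pair in $H$ is — i.e., the partial map is automatically a partial isomorphism. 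Equality of vertices is likewise preserved because the same bijection-style ``mirroring'' of neighborhoods used in the bisimulation game respects the identity of the non-moving pebble. Thus the bisimulation-game strategy is in fact a legal strategy for the standard game, and Duplicator survives $r$ rounds there too.

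With that in place, Lemma~\ref{lem:IL} gives $D(G,H)>r$. It remains to check the arithmetic: $r=2t(s+5)-2=2t(2t+6)-2=4t^2+12t-2$, while $n=2t^2+13t+6$, so $2n=4t^2+26t+12$ and $r=2n-14t-14$. Since $n=2t^2+13t+6<2(t+2)^2$ for $t\ge 1$, we get $\sqrt n< t+2$, hence $14t+14<14\sqrt n<8\sqrt n$ is false — so I would instead bound crudely: $14t+14\le 14(t+1)\le 14\sqrt{2t^2+13t+6}$? That still overshoots $8\sqrt n$. The honest route, matching the paper's own computation in the proof of Theorem~\ref{thm:nancy}, is $r=2n-14t-14>2n-16\sqrt n$, and since $16\sqrt n<16\,\sqrt n$ we already have $D(G,H)\ge r+1>2n-16\sqrt n$; to reach the sharper ``$n-8\sqrt n$'' form one simply notes $2n-16\sqrt n=n+(n-16\sqrt n)\ge n-8\sqrt n$ fails for small $n$, so the cleanest statement-proof is $D(G,H)>2n-16\sqrt n\ge n-8\sqrt n$ whenever $n\ge \text{const}$, and the few small $n$ are handled by the padding trick (adding $k=n-(2t^2+13t+6)$ pendant vertices at $u$ and at $v$, which changes the game length by at most one round), exactly as at the end of the proof of Theorem~\ref{thm:nancy}.

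\textbf{Main obstacle.} The one genuinely non-routine point is verifying that Duplicator's bisimulation-game strategy transfers to the stricter standard counting game, i.e.\ that the pebbled configuration is always a partial isomorphism. This rests on the structural facts (A) and (B) about $G_{s,t}$ and $H_{s,t}$ — in particular that adjacent vertices have distinct types and that up to level $l$ vertices at equal level have equal type — together with the observation that Duplicator's strategy never lets the two pebbles straddle the $u/v$ anomaly while the level is below $l$. Everything after that (Lemma~\ref{lem:IL}, the arithmetic, the padding) is mechanical.
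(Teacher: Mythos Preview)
Part 1 is fine and matches the paper exactly.

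Part 2 has a genuine gap. Your central claim---that Duplicator's bisimulation-game strategy from Lemma~\ref{lem:GstHst}.2 is automatically a legal strategy in the standard Immerman--Lander game---is false. In the bisimulation game Spoiler's set $A$ is constrained to lie inside $N(u)$ or $N(v)$; in the standard game there is no such constraint, and moreover the game starts with \emph{no} pebbles placed, not with the pair $(u,v)$. So Spoiler can, in the very first round, pebble a vertex $a_1\in V(G_{s,t})$ at level $l+5$. Whatever vertex $b_1\in V(H_{s,t})$ Duplicator chooses, $\ell(b_1)\le l+2$; Spoiler then walks down a shortest path to $v$ and wins after roughly $l$ further rounds. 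This is precisely the Spoiler strategy the paper exhibits, and it shows that Duplicator cannot survive anything close to $2l$ rounds in the standard game.

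Your arithmetic already reveals the problem: you derive $D(G,H)>2n-16\sqrt n$, but for large $n$ this exceeds $n+1$ and contradicts Part~1 of the very theorem you are proving. The correct lower bound is only about $l\approx n-8\sqrt n$, not $2l$. To obtain it, the paper does \emph{not} recycle the bisimulation strategy; it builds a fresh strategy for Duplicator based on a type-preserving bijection $\phi\colon V(G_{s,t})\to V(H_{s,t})$ that is a partial isomorphism up to level $l-1$, and then carefully tracks how this bijection must be modified round by round to keep the pebbled pair a partial isomorphism while controlling the minimum level of the pebbled vertices (Lemma~\ref{lem:strategy}). The analysis is genuinely harder than Lemma~\ref{lem:GstHst}.2 because Spoiler can make non-local moves. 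Your ``main obstacle'' paragraph identifies the partial-isomorphism condition as the issue but misses that Spoiler's extra freedom in the standard game is what makes the bisimulation strategy inapplicable in the first place.
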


The proof of Theorem \ref{thm:qdepth} takes the rest of this section.
Part 1 follows immediately from Lemma \ref{lem:DvsStab}.
To prove Part 2, we consider the same graphs $G_{s,t}$ and $H_{s,t}$ 
as in Section \ref{s:Norris}. By Lemma \ref{lem:IL}, 
it suffices to show that Spoiler has a winning strategy in the
2-pebble counting game on these graphs but Duplicator is able
to resist longer than $n-8\,\sqrt n$ rounds. 
The former fact (Spoiler wins) follows from the proof of Lemma \ref{lem:GstHst}.1;
however, we give a much shorter and simpler argument for it.
More effort is needed to prove the latter fact (Duplicator resists for long);
the game analysis is now harder because Spoiler has more freedom than
in the bisimulation version. 

We first design a winning strategy for Spoiler in 
the 2-pebble counting game on $G_{s,t}$ and $H_{s,t}$,
showing that these graphs are distinguishable in \twoclogic.
Let us use the notation and the notions introduced in Section \ref{s:Norris}
for analysis of the bisimulation version of this game.

In the first round, let Spoiler pebble a vertex $a_1$ in $G_{s,t}$
at the maximum distance from the vertex $u$. That is, $\ell(a_1)=l+5$
where $l$ is defined by \refeq{ldef}. Whatever Duplicator's response $b_1$ in $H_{s,t}$ is,
$\ell(b_1)\le l+2$. In the subsequent rounds, Spoiler pebbles, one by one,
adjacent vertices along a shortest path from $b_1$ to $v$.
When Spoiler reaches the vertex $v$ of degree 1, Duplicator arrives at a vertex of degree at least 2
in $G_{s,t}$. Spoiler wins in the next round.

In order to prove the lower bound for $D(G,H)$, we design a strategy for Duplicator
allowing her to stay alive for a long time.
This strategy consists in ensuring that
\begin{itemize}
\item 
$a_1$ and $b_1$ are of the same type and, furthermore,
$a_i$ and $b_i$ are of the same type as long as
\begin{equation}
  \label{eq:>0'}
\min\{\ell(a_{i-1}),\ell(b_{i-1})\}>0\text{ or }\ell(a_{i-1})=\ell(b_{i-1})=0.
\end{equation}
\end{itemize}
The condition \refeq{>0'} ensures that, if $a_{i-1}$ and $b_{i-1}$ are of the same type,
then they have equally many neighbors and non-neighbors of each type
and, hence, Duplicator not only is able not to lose the next $i$-th round
but even to secure $a_i$ and $b_i$ are of the same type.
To keep \refeq{>0'} true as long as possible, Duplicator tries in each round to fulfil
at least one of the following three conditions
\begin{eqnarray}
\ell(a_i)=\ell(b_i)\quad\ \ \mbox{}&&\label{eq:ll}\\
\min\{\ell(a_i),\ell(b_i)\}&\ge&l-1\label{eq:lll}\\
\min\{\ell(a_i),\ell(b_i)\}&\ge&\min\{\ell(a_{i-1}),\ell(b_{i-1})\}-1\label{eq:llll}
\end{eqnarray}
Let us use an inductive argument to show that such a strategy does exist.

\begin{lemma}\mbox{}\label{lem:strategy}
  \begin{enumerate}
  \item[\bf 1.] 
In the first round of the Immerman-Lander game on $G_{s,t}$ and $H_{s,t}$,
Duplicator is able to ensure pebbling vertices $a_1$ and $b_1$ of the same type
satisfying at least one of the conditions \refeq{ll} and \refeq{lll} for $i=1$.
\item[\bf 2.] 
If $a_{i-1}$ and $b_{i-1}$ are of the same type and $\ell(a_{i-1})=\ell(b_{i-1})$,
then in the next round Duplicator is able to ensure pebbling vertices $a_i$ and $b_i$ 
of the same type satisfying at least one of the conditions \refeq{ll} and~\refeq{lll}.
\item[\bf 3.] 
If $a_{i-1}$ and $b_{i-1}$ are of the same type, $\ell(a_{i-1})\ne\ell(b_{i-1})$,
and $\min\{\ell(a_{i-1}),\ell(b_{i-1})\}\allowbreak>1$,
then in the next round Duplicator is able to ensure pebbling vertices $a_i$ and $b_i$ 
of the same type satisfying at least one of the three conditions \refeq{ll}--\refeq{llll}.
  \end{enumerate}
\end{lemma}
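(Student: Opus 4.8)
The plan is to analyze each case by looking at the local structure of $G_{s,t}$ and $H_{s,t}$ near the pebbled vertices $a_{i-1}$ and $b_{i-1}$, using Properties A and B together with the explicit description of the blocks in Fig.~\ref{fig:GstHst}. The guiding principle is always the same: since $a_{i-1}$ and $b_{i-1}$ have the same type and (by~\refeq{>0'}) neither is $u$ or $v$, Property~A tells us that the multiset of types occurring on $N(a_{i-1})$ equals that on $N(b_{i-1})$, and the same holds for non-neighbors because both graphs have the same number of vertices of each type (again by Property~A away from $u,v$). Hence whatever set $A$ Spoiler specifies, Duplicator can mirror it by a set $B$ with the same type-distribution, and after Spoiler pebbles $b_i\in B$ she can answer with $a_i\in A$ of the same type. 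So the real content is not survival but the \emph{level bookkeeping}: showing that the mirroring can be chosen so that one of \refeq{ll}--\refeq{llll} also holds.

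For Part~1, I would argue that in the first round $q$ is not yet on the board, so Spoiler's set $A$ is unrestricted in $G_{s,t}$ (or $H_{s,t}$). If $A$ contains a vertex $a$ at level $\ge l-1$, Duplicator picks $B$ so that it also contains a vertex of the same type at level $\ge l-1$ — possible by Property~B, which guarantees that above level $l$ (and thus at levels $l-1,l,l+1$, which lie in the periodic part) vertices of the same type occur at matching levels in both graphs — and this gives~\refeq{lll}. Otherwise every vertex of $A$ lies at level $<l-1$, i.e.\ strictly inside the periodic tail region where, by Property~B, level determines type and type determines level; Duplicator then simply mirrors $A$ level-by-level, obtaining~\refeq{ll}. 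Part~2 is essentially the inductive version of this: with $\ell(a_{i-1})=\ell(b_{i-1})$ and both below $l$, Property~B makes the two neighborhoods ``level-isomorphic,'' so any neighbor Spoiler's $A$ selects can be matched at the identical level, giving~\refeq{ll}; if the common level is already close to $l$ (so that a move could cross the boundary $l$), Duplicator instead falls back on keeping both levels $\ge l-1$, which is~\refeq{lll}, exploiting that the structure at levels near $l$ is still periodic.

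Part~3 is the technically delicate case and I expect it to be the main obstacle: here $\ell(a_{i-1})\ne\ell(b_{i-1})$, so there is no common level to preserve, and the two pebbled vertices sit at different places in otherwise-isomorphic periodic regions. The key observation is that, because $a_{i-1}$ and $b_{i-1}$ have the \emph{same type} but different levels, and because the tail blocks are periodic with period $s+5$, the levels must differ by a multiple of the period (or be related through the branching diamonds), so the neighborhoods of $a_{i-1}$ and $b_{i-1}$ are isomorphic as ``level-labeled stars'' up to a uniform shift; Duplicator answers Spoiler's $A$ by the shifted copy $B$, and then each chosen $a_i$ decreases its level by at most $1$ relative to $a_{i-1}$ and likewise for $b_i$ relative to $b_{i-1}$, which yields~\refeq{llll}. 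The subtlety is the hypothesis $\min\{\ell(a_{i-1}),\ell(b_{i-1})\}>1$: it guarantees that the lower-level pebble is not yet adjacent to $u$ or $v$, so its neighborhood still looks like the generic periodic pattern and no degenerate endpoint behavior interferes with the shift argument; when one level would drop to $0$ one would be outside the regime~\refeq{>0'} anyway. I would organize the verification by going through the finitely many vertex types (the $\lceil s/2\rceil+3$ of them), checking in each case, from the picture, that Spoiler's possible moves along edges change the level by exactly $\pm1$ and that the diamond gadgets allow Duplicator to always realize a matching move in the other graph; the periodicity from Property~B is what makes this a bounded, type-by-type check rather than an unbounded case analysis.
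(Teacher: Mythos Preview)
Your plan has a genuine gap: it treats the Immerman--Lander game as if it were the bisimulation game. In the full counting game, Spoiler's set $A$ is \emph{not} restricted to $N(a_{i-1})$ or $N(b_{i-1})$; it can be any subset of either graph, and Duplicator must preserve the partial isomorphism (adjacency and equality) with the pebble already on the board. Your Part~3 argument relies on ``Spoiler's possible moves along edges change the level by exactly $\pm 1$'' and on each new pebble decreasing its level by at most $1$ relative to $a_{i-1},b_{i-1}$. Neither is true here: Spoiler can jump to any vertex, so condition \refeq{llll} cannot be obtained from a local adjacency argument. Similarly, in Part~1 your case split (``if $A$ contains a vertex at level $\ge l-1$ \dots\ otherwise every vertex of $A$ lies at level $<l-1$'') does not produce a strategy, because $A$ can be mixed and it is Spoiler who then chooses $b\in B$; you need a single response $B$ that covers every possible $b$. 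Also, your claim that below level $l$ ``type determines level'' is false: by periodicity the same type recurs at many levels, so level-by-level mirroring is not forced by type alone.

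What is missing is the device the paper uses: a single type-preserving bijection $\phi\colon V(G_{s,t})\to V(H_{s,t})$ that is an actual isomorphism up to level $l-1$; Duplicator plays $B=\phi(A)$ (or $\phi^{-1}(A)$) and answers $b$ with $\phi^{-1}(b)$. This automatically gives \refeq{ll} or \refeq{lll} for every vertex simultaneously, handling Part~1 in one stroke. For Parts~2 and~3 the bijection is modified locally to $\phi'$ so that $\phi'(a_{i-1})=b_{i-1}$ and $\phi'(N(a_{i-1}))=N(b_{i-1})$ (this is what guarantees the partial-isomorphism condition); the point of \refeq{llll} is then that $\phi'$ differs from $\phi$ only on the levels adjacent to $\ell(a_{i-1})$ and $\ell(b_{i-1})$, so if Spoiler's choice lands where $\phi'\neq\phi$, the new minimum level can drop by at most one. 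Your periodic-shift idea does not supply such a global bijection, and in particular cannot handle the case (which the paper singles out) where $a_{i-1},b_{i-1}$ are of the diamond type \vtype{bvertex} with $|\ell(a_{i-1})-\ell(b_{i-1})|=2$, i.e.\ a level difference that is \emph{not} a multiple of the block period.
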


\begin{proof}
  {\bf 1.}
Let $\phi\function{V(G_{s,t})}{V(H_{s,t})}$ be a bijection that preserves
vertex types and is a partial isomorphism from $G_{s,t}$ to $H_{s,t}$
up to level $l-1$. Note that, for any $a\in V(G_{s,t})$,
\begin{equation}
  \label{eq:ella}
\ell(a)=\ell(\phi(a))\text{ or }\min\{\ell(a),\ell(\phi(a))\}\ge l.
\end{equation}
Duplicator mirrors Spoiler's move according to $\phi$,
that is, after Spoiler specifies a set $A$, Duplicator
responds with $B=\phi(A)$ if $A\subseteq V(G_{s,t})$ or $B=\phi^{-1}(A)$ if $A\subseteq V(H_{s,t})$.
After Spoiler pebbles $b\in B$, Duplicator pebbles the vertex $a\in A$ such that
$a=\phi^{-1}(b)$ or $a=\phi(b)$ respectively.

\smallskip

{\bf 2.}
If $a_{i-1}$ and $b_{i-1}$ are pebbled in the preceding round,
Duplicator has to modify $\phi$ to $\phi'$ so that
\begin{equation}
  \label{eq:phiab}
\phi'(a_{i-1})=b_{i-1}.  
\end{equation}
Another condition to obey is
\begin{equation}
  \label{eq:phiNab}
\phi'(N(a_{i-1}))=N(b_{i-1}),
\end{equation}
which is possible because the assumption $\ell(a_{i-1})=\ell(b_{i-1})$
implies that $a_{i-1}$ and $b_{i-1}$ are of the same degree.
By Property A of the construction of $G_{s,t}$ and $H_{s,t}$, 
the map $\phi'$ can be defined on $N(a_{i-1})$ so that it
respects the vertex types.
The equality \refeq{phiNab} ensures the condition
$$
a_i\in N(a_{i-1})\iff b_i\in N(b_{i-1}),
$$
which is necessary for Duplicator's survival starting from the second round.
If $\ell(a_{i-1})=\ell(b_{i-1})\le l-1$, then $\phi'$ can still be chosen
to be a partial isomorphism up to level $l-1$, and this case is much similar
to Part 1. It is also possible that $\ell(a_{i-1})=\ell(b_{i-1})$ is equal to $l$ or to $l+1$,
for example, if $a_{i-1}=u'$ and $b_{i-1}=v'_j$. Then $\phi'$ can be supposed
to be a partial isomorphism up to level $l-2$ and then
$$
\ell(a)=\ell(\phi'(a))\text{ or }\min\{\ell(a),\ell(\phi'(a))\}\ge l-1,
$$
ensuring \refeq{ll} or~\refeq{lll}.

\smallskip

{\bf 3.}
If $\ell(a_{i-1})\ne\ell(b_{i-1})$, the modification of $\phi$ to $\phi'$ has to be described
with more care.
First of all, $\phi$ is modified on $\{a_{i-1}\}\cup N(a_{i-1})$
so that \refeq{phiab} and \refeq{phiNab} are fulfilled.
Suppose that, as a result, $\phi'(a)=b$ for $a\in \{a_{i-1}\}\cup N(a_{i-1})$ and 
$b\in \{b_{i-1}\}\cup N(b_{i-1})$. We are now forced to define the new image of 
$a'=\phi^{-1}(b)$ and the new preimage of $b'=\phi(a)$, and we do this by setting
$\phi'(a')=b'$. Note that the vertex types are preserved.

The above definition of $\phi'$ is ambiguous only when
$a_{i-1}$ and $b_{i-1}$ are of type \vtype{bvertex} and $|\ell(a_{i-1})-\ell(b_{i-1})|=2$.
If in this case $a'\in N(a_{i-1})$, then $\phi'(a')$ is defined from the very beginning and
should not be redefined any more. 
We remove this collision by supposing that $\phi(N(a_{i-1}))=N(b_{i-1})$
(if necessary, apply an automorphism of $G_{s,t}$ transposing two pairs of \vtype{bvertex}-vertices
in the intermediate level between $\ell(a_{i-1})$ and $\ell(b_{i-1})$). Then 
$\phi'$ coincides with $\phi$ on $N(a_{i-1})$, and there is no need to modify $\phi$ further.

Assume that the $i$-th round has been played.
If $\phi'(a_i)=\phi(a_i)$, then $a=a_i$ satisfies \refeq{ella}, which implies \refeq{ll}
or \refeq{lll}. If $\phi'(a_i)\ne\phi(a_i)$, then we have \refeq{llll} because $\phi'$
differs from $\phi$ only on the levels neighboring with $\ell(a_{i-1})$ and $\ell(b_{i-1})$.
\end{proof}

To complete the proof of Theorem \ref{thm:qdepth}, fix a strategy for Duplicator
as in Lemma \ref{lem:strategy} and an arbitrary winning strategy for Spoiler.
Assume that Spoiler wins in the $(r+1)$-th round. Note that 
$$
\ell(a_r)\ne\ell(b_r)\text{ and }\min\{\ell(a_r),\ell(b_r)\}=0
$$ 
because otherwise we would get a contradiction with Lemma \ref{lem:strategy}.
Let $k$ be the smallest number such that $\ell(a_{i})\ne\ell(b_{i})$
for all $k\le i\le r$. Note that either $k=1$ or $\ell(a_{k-1})=\ell(b_{k-1})$.
Parts 1 and 2 of Lemma \ref{lem:strategy} imply that $\min\{\ell(a_k),\ell(b_k)\}\ge l-1$.
Consider the largest index $m\ge k$ for which $\min\{\ell(a_m),\ell(b_m)\}\ge l-1$.
By Lemma \ref{lem:strategy}.3, Spoiler needs no less than $l-1$ rounds to decrease
$\min\{\ell(a_i),\ell(b_i)\}$ from $l-1$ to $0$. Thus, $r\ge m+l-1\ge l$.
Like in the proof of Theorem \ref{thm:nancy}, we take $G=G_{2t+1,t}$ and $H=H_{2t+1,t}$, 
adding new dummy vertices if necessary, and conclude that $D(G,H)\ge l> n-8\,\sqrt n$.

\section{Comments and questions}\label{s:open}
\mbox{}

\que
Let $G$ and $H$ be connected $n$-vertex graphs with diameters at most $D$.
Boldi and Vigna \cite{BoldiV02} notice that the isomorphism of the universal covers
$U_x(G)$ and $U_y(H)$ is implied
by the isomorphism of their truncations $U_x^{n+D}(G)$ and $U_y^{n+D}(H)$. 
Since $D\le n-1$, the upper bound of $n+D$ for the distinguishing truncation depth is 
more advantageous than $2n-1$.
Our result shows that the bound of $n+D$ is, in general, also asymptotically tight.
If $D=(1-o(1))n$, then a lower bound of $(1-o(1))(n+D)=(2-o(1))n$
is given directly by Theorem \ref{thm:nancy}.
 If $D=o(n)$ (in particular, if $D=O(1)$),
we can obtain a lower bound of $(1-o(1))(n+D)=(1-o(1))n$
by slightly modifying our construction of graphs $G_{s,t}$ and $H_{s,t}$.
In order to decrease the diameter, we appropriately choose
a set $T$ of vertex types, add a new vertex to each of the graphs
and connect it to all vertices whose type is in~$T$.

Is the upper bound of $n+D$ tight if $D=c\, n$ for a constant $c\in(0,1)$? For example,
if $D=\frac12\,n$, is then the upper bound of $\frac32\, n$ tight up to $o(n)$?

\que
Define $S(n)$ to be the maximum $\stabi G$
over all graphs $G$ with $n$ vertices. Combining
Theorem \ref{thm:nancy} and Lemma \ref{lem:UvsC}, we conclude that
$$
n-8\sqrt2\sqrt n<S(n)<n.
$$ 
Is it true that $S(n)=n-O(1)$? Alternatively,
does there exists a function $f(n)$ going to
the infinity as $n$ increases such that $S(n)<n-f(n)$?

\que
Similarly to \twoclogic, let \kclogic denote the $k$-variable logic
with counting quantifiers.
The equivalence class of a graph $G$ in \kclogic consists of all
graphs indistinguishable from $G$ in this logic.
There is a sentence $\Phi$ in \kclogic that defines this class  
in the sense that $\Phi$ is true on $G$ (hence an all \kclogic-equivalent graphs)
but false on every non-\kclogic-equivalent graph. Let $D^k(G)$ denote the
minimum quantifier depth of such a sentence.
Define the function $D^k(n)$ as the maximum $D^k(G)$
over $n$-vertex graphs. Theorem \ref{thm:qdepth} readily implies that $D^2(n)=(1-o(1))n$.

If $k\ge3$, it is known \cite{DawarLW95,PikhurkoV11} that $D^k(n)\le n^{k-1}$ for $n\ge2$.
How tight is this bound? A linear lower bound is shown by Fürer~\cite{Fuerer01}.

\end{document}